\newtheorem{theorem}{Theorem}
\newtheorem{lemma}[theorem]{Lemma}
\newtheorem{claim}[theorem]{Claim}
\newtheorem{corollary}[theorem]{Corollary}
\theoremstyle{definition}
\newtheorem{example}{Example}[]
\theoremstyle{remark}
\newcommand{\bigO}{\mathcal{O}}
\DeclareMathOperator*{\argmax}{\arg\!\max}
\DeclareMathOperator{\one}{\mathds{1}}
\DeclareMathOperator{\bin}{\mathcal{B}}
\DeclareMathOperator{\co}{co}
\def\x{\mathbf{x}}
\def\P{\mathbf{P}}
\newcommand{\Prr}[1]{\Pr\left[\,#1\,\right]}
\newcommand{\R}{\mathbb{R}}
\newcommand{\E}{\mathbb{E}}
\title{Impartial selection with prior information}
\author{Ioannis Caragiannis}
\affiliation{%
	\institution{Aarhus University}
	\department{Department of Computer Science}
	\streetaddress{{\AA}bogade 34}
	\city{Aarhus}
	\postcode{8240}
	\country{Denmark}
}
\email{iannis@cs.au.dk}
\author{George Christodoulou}
\affiliation{%
	\institution{University of Liverpool}
	\department{Department of Computer Science}
	\streetaddress{Ashton Building, Ashton Street}
	\city{Liverpool}
	\postcode{L69 3BX}
	\country{United Kingdom}
}
\email{G.Christodoulou@liverpool.ac.uk}
\author{Nicos Protopapas}
\affiliation{%
	\institution{University of Liverpool}
	\department{Department of Computer Science}
	\streetaddress{Ashton Building, Ashton Street}
	\city{Liverpool}
	\postcode{L69 3BX}
	\country{United Kingdom}
}
\email{G.Christodoulou@liverpool.ac.uk}
\begin{abstract}
  We study the problem of {\em impartial selection}, a topic that lies
  at the intersection of computational social choice and mechanism
  design. The goal is to select the most popular individual among a
  set of community members. The input can be modeled as a directed
  graph, where each node represents an individual, and a directed edge
  indicates nomination or approval of a community member to
  another. An {\em impartial mechanism} is robust to potential selfish
  behavior of the individuals and provides appropriate incentives to
  voters to report their true preferences by ensuring that the chance
  of a node to become a winner does not depend on its outgoing
  edges. The goal is to design impartial mechanisms that select a node
  with an in-degree that is as close as possible to the highest
  in-degree. We measure the efficiency of such a mechanism by the
  difference of these in-degrees, known as its {\em additive}
  approximation.

  Following the success in the design of auction and posted pricing
  mechanisms with good approximation guarantees for welfare and profit
  maximization, we study the extent to which prior information on
  voters' preferences could be useful in the design of efficient
  deterministic impartial selection mechanisms with good additive
  approximation guarantees. We consider three models of prior
  information, which we call the {\em opinion poll}, the {\em a priori
    popularity}, and the {\em uniform} model. We analyze the
  performance of a natural selection mechanism that we call {\em
    approval voting with default} (AVD) and show that it achieves a
  $\bigO(\sqrt{n\ln{n}})$ additive guarantee for opinion poll and a
  $\bigO(\ln^2n)$ for a priori popularity inputs, where $n$ is the
  number of individuals. We consider this polylogarithmic bound as our
  main technical contribution. We complement this last result by
  showing that our analysis is close to tight, showing an
  $\Omega(\ln{n})$ lower bound. This holds in the uniform model, which
  is the simplest among the three models.
\end{abstract}
\begin{document}

\maketitle	

\section{Introduction}\label{sec:intro}

We study the problem of {\em impartial selection}, which has recently
attracted a lot of attention from a social choice theory and mechanism
design point of view. The goal is to select the most popular
individual, among a set of community members. The input can be modeled as a directed graph, where each node represents
an individual, and a directed edge indicates nomination or approval of
a community member to another. A selection mechanism takes a graph as
input and returns a single node as the winner. This would be a trivial
task from the algorithmic point of view, but the challenge here is
that the true preferences of the individuals are private information
known only to them. In settings where each individual is
simultaneously a voter and a candidate and therefore has also personal
interest in becoming a winner, she may manipulate the mechanism and
misreport her true preferences if this could increase her chance to
win. An {\em impartial mechanism} is robust to such behavior, and
provides appropriate incentives to voters to report their true
preferences by ensuring that the chance of a node to become a winner,
does not depend on its outgoing edges.

Unfortunately, it is well-known that the obvious selection mechanism
that always returns the highest in-degree node as a winner, suffers
from possible manipulation, i.e., it is not impartial.  The challenge
is to design an impartial selection mechanism that selects a winner
with an in-degree that approximates well the highest in-degree.

Impartial selection, was introduced independently by Holzman and
Moulin~\cite{moulin2013} and Alon et al.~\cite{alon11}.  The former
work considered minimum axiomatic properties that impartial selection
rules should satisfy, while the latter quantified the efficiency loss
with the notion of the approximation ratio, defined as the worst-case
ratio of the maximum in-degree over the in-degree of the node
which is selected by the mechanism. This line of research 
concluded with the work of Fischer and Klimm~\cite{fischer2015}, who proposed randomized impartial
mechanisms with optimal approximation ratio.

It is well-known \cite{alon11,fischer2015} that the most challenging
nomination profiles in terms of their approximation ratio, for both
deterministic and randomized mechanisms, are those with small
in-degrees. In particular, all deterministic impartial mechanisms have
unbounded ratio, and this can be demonstrated even on inputs with two
nodes and of maximum in-degree of $1$ (see~\cite{alon11} for an
example). Inspired by this crucial observation, Bousquet et
al.~\cite{bousquet2014} designed impartial randomized mechanisms that
return a nearly optimal node when the maximum in-degree is high
enough. Caragiannis et al.~\cite{caragiannis2019impartial} went one
step further to quantify this effect, by advocating that {\em
  additive} approximation may be a more appropriate measure to evaluate
impartial mechanisms, and also by providing mechanisms with sublinear
additive approximation guarantees.

The most natural and well-studied selection rule is the {\em approval voting rule} (AV), which has received much attention in social choice theory~\cite{LS10}.
In our context, AV always returns the node with the highest
in-degree. Unfortunately, as already mentioned, this mechanism is not
impartial. The reason is that in case of a tie at the maximum degree,
some of the nodes involved in the tie may have incentive to vote
non-truthfully. Fortunately, there is a simple fix of this deficiency, which is inspired by the simpler {\em plurality with default} mechanism by Holzman and Moulin~\cite{moulin2013}; {\em in
  case of a tie, select as winner a predetermined/default
  node}. We refer to this modified version of AV as {\em approval voting with default} (AVD). Although (a careful implementation of) this tweak re-establishes impartiality, this
modification comes at a cost, as this preselection should be
independent of the input graph. Imagine a scenario, where there is a tie
between two nodes with the maximum degree. In the unfortunate situation
where the default node receives only a small number of votes, this might
lead to a poor additive approximation, linear in the number of nodes.

Most of the previous work
consider randomized mechanisms, hence the efficiency is measured in
expectation. However, in the design of selection mechanisms, 
determinism is arguably more desirable.  Unfortunately, all the
known deterministic mechanisms have very poor, linear additive
approximation, and it is wide open whether substantially better mechanisms exist.

In this work, we take a different route: we study the extent to which 
{\em prior information} on the preferences of the voters could allow the
design of deterministic impartial selection mechanisms with good additive
approximation guarantees.  
%
Our focus is on the analysis of AVD, for which our design choice boils
down to an effective choice of the default node, with the help of the
prior information. We assume that the preferences are drawn from a
probability distribution that is known to the mechanism. We assume
throughout {\em voter independence},\footnote{We should note that,
  with correlated distributions there is not much one can achieve
  (see Example~\ref{ex:1} in the appendix).} that is, the random
choice of preferences for each voter is independent of those of the
others.


We propose different models that capture several aspects of the
problem. In the {\em opinion poll} model, we assume that the prior
information concerns information about the preferences of different
(types of) voters. The designer has access to the probability
$p_i{(S)}$, with which voter $i$ (or all voters of type $i$) would
approve a subset $S$ of candidates.
The {\em a priori popularity}
model assumes that the designer has prior information about the
popularity of each candidate $j$, which is summarized by a scalar $p_j$.  
We assume that each candidate $j$ receives independently a vote from
each voter with probability $p_{j}$. As a special case, we also study the uniform model, in which every  candidate $j$ has the same popularity $p_j=p$.

Note that these models capture different information scenarios; the
former assumes that the designer has access to opinion poll statistics
for each (type of) voter, while the latter assumes that the designer
has access only to aggregate information about the popularity of a
candidate. This aggregation is over the whole population of voters, as
the actual information may be sanitized to preserve anonymity of those
(types) participated in the poll. Note that popularity may measure
other forms of biases over specific individuals. For example, consider
the situation in which a PC wants to decide the best paper award;
then, the a priori popularity of a paper could be a function of the
authors' esteem, affiliation, etc.

\subsection{Contribution and techniques}

	 
Our main focus is the analysis of the AVD mechanism. We begin with the
opinion poll model and, as a warm-up, in Section~\ref{sec:constant},
we present a simple mechanism that ignores the edges of the graph and
selects as winner a pre-selected node of {\em maximum expected
  in-degree}. We call this mechanism the \emph{constant mechanism},
and show that it is $\Theta(\sqrt{n\ln{n}})$-additive
(Theorem~\ref{UB:constant}, Theorem~\ref{LB:GeneralmodelLB}).
The AVD mechanism that selects as default the node of highest
expected in-degree, can only perform better than the constant
mechanism.
%
%
%
Our main and most technically involved result shows that this version
of the AVD is $\bigO(\ln^2n)$-additive in the a priori popularity
model (Theorem~\ref{thm:plurality-with-default}). We complement this result by showing that our analysis is
tight, up to a logarithmic factor: even for uniform inputs where all
candidates are a priori equally popular, there is a class of instances
for which AVD has additive approximation $\Omega(\ln{n})$, {\em for
  any} choice of the default node (Theorem \ref{thm:lb-uniform}).
	

The analysis of the constant mechanism serves multiple purposes. It
illustrates that when prior information is available, a low expected
additive approximation is achievable even by simple deterministic
mechanisms, and by the simplest statistic of the prior, that is the
expected in-degree. This is in sharp contrast to the no-prior case,
where deterministic mechanisms have a very poor performance for both
additive \cite{caragiannis2019impartial} and multiplicative
approximation \cite{alon11}.  Second, the analysis of the constant
mechanism is quite simple; e.g., the upper bound follows by a simple
application of the Hoeffding bound. However, it introduces some of the
techniques (such as tail inequalities and reverse Chernoff bounds)
that our strongest results in Sections~\ref{sec:apriori}
and~\ref{sec:lb-AVD} use. Finally, the upper bound on the expected
additive approximation of the constant mechanism serves as a benchmark
of efficiency for all impartial mechanisms with priors.

The analysis of AVD is considerably more involved than the
analysis of the constant mechanism. Roughly speaking, the important
quantity that affects the additive approximation is the difference
between the maximum in-degree and the in-degree of the default node
when two or more nodes are tied with the highest in-degree, times the
probability of this tie. In the a priori popularity model, the
in-degree $d$ of a node is a random variable following the binomial
probability distribution with parameters $n$ (the number of trials)
and $p$ (the probability that a trial is successful). Furthermore, the
in-degrees of different nodes are independent. Hence, bounding the
probability of a tie at the maximum in-degree is related to (but more
demanding than) bounding the probability that two out of many
independent binomial random variables take the same maximum value.

Unfortunately, even though problems of this kind have been studied in
the literature of applied probability and statistics (e.g.,
see~\cite{BSW94,ES96,ESS93}), the existing results have not been
proved useful for our purposes. When the difference between the
maximum in-degree and the in-degree of the default node is large,
Chernoff bounds can unsurprisingly be used to show that the
probability of a tie at maximum is negligible and, hence, the
contribution to the expectation of the quantity of interest is
negligible as well. The real challenge is when the difference of the
two in-degrees is small. In this regime, it turns out that we need
sharp bounds on the ratio $\Pr[d=x]/\Pr[d\geq x]$ (also called the
{\em hazard function}) for a binomial random variable $d$ and value
$x$ that is close to the expectation $\mu=pn$ of $d$ (i.e., so that
$\Pr[d\geq x]$ is only polynomially small in terms of $n$). As we show
in Lemma~\ref{lem:technical}, en route to proving
Theorem~\ref{thm:plurality-with-default}, the ratio
$\Pr[d=x]/\Pr[d\geq x]$ is at most
$\bigO\left(\sqrt{\frac{\ln{n}}{\min\{\mu,n-\mu\}}}\right)$ in this
case.

We believe that this technical tool can be of independent interest and
could find applications elsewhere. The bound is asymptotically tight;
its tightness for $p=1/2$ is exploited in the proof of our logarithmic
lower bound (Theorem~\ref{thm:lb-uniform}). The exact dependence on the quantity $\sqrt{\min\{\mu,n-\mu\}}$
is very important to achieve a polylogarithmic upper bound on the
additive approximation of AVD for every a priori popularity input. In
addition to the above crucial idea and Chernoff bounds, our proof in Theorem~\ref{thm:lb-uniform} involves an inverse
Chernoff bound to bound {\em from below} the probability that a
binomial random variable is far from its expectation. These statements
are less popular than Chernoff bounds but rather standard.




\subsection{Further related work}

Impartial selection was introduced independently by Alon et al.~\cite{alon11} and Holzman and Moulin~\cite{moulin2013}.
Alon et al.~\cite{alon11} proposed the {approximation ratio} as the fraction between the
highest in-degree and the (expected) in-degree of the
winner. They provided a simple, $4$-approximate randomized mechanism and noted that
no randomized impartial mechanism can achieve an approximation ratio less than $2$, even with randomization. If randomization is not allowed, however, the approximation ratio can be arbitrarily large.
Later on, Fischer and Klimm~\cite{fischer2015} introduced a $2$-approximation randomized mechanism, closing that gap.
Bousquet et al.~\cite{bousquet2014} proposed a randomized mechanism
with an arbitrarily close to optimal approximation ratio, provided that the maximum
in-degree of the graph is large enough.

Holzman and Moulin~\cite{moulin2013} considered various mechanisms
under the more restricted family of graphs where each node has
an out-degree equal to $1$. Among others, they proposed the plurality with
default mechanism, which can be seen as a version of AVD mechanism, tailored
to that family of inputs. 
They also came up with an important impossibility result regarding the
quality of impartial mechanisms: Any deterministic impartial mechanism can guarantee, either to never select $0$ in-degree nodes or to always select a unanimously nominated node, but never both. Variations of the problem are studied
in~\cite{bjelde2017,declipper2008,mackenzie2015,tamura2016characterizing,tamura2014impartial}.


Additive approximation was first studied by Caragiannis et
al.~\cite{caragiannis2019impartial}. Therein, they propose
simple randomized mechanisms with sub-linear additive approximation
guarantees. They also show that a specific class of deterministic
mechanisms cannot achieve additive approximation less that $n-1$ (i.e. the worst possible additive approximation), while
an equivalent class, in the randomized setting, cannot achieve additive approximation
better than $\Omega(\sqrt{n})$. For general deterministic mechanisms however,
they only show a lower bound of $2$, while no deterministic
mechanism is known with additive approximation smaller than
$n-1$. Closing this gap remains a tantalizing open question.

Impartiality is encountered in various domains. In the AI
literature, a related application is
peer-reviewing~\cite{aziz2019strategyproof,kahng2018ranking,Kurokawa2015,mattei2020peernomination}.
In another direction, Babichenko et
al.~\cite{babichenko2018incentive,babichenko2020incentive} present
impartial mechanisms for the selection of the most influential node in a
network. The main difference with our setting is that the influence of a specific
node does not depend merely on its in-degree, but also on all the paths leading to that node.
 Mackenzie~\cite{mackenzie2019axiomatic} analyses the papal conclave through the lens of impartiality.

Our motivation for considering prior information comes from its
successful application to auction and posted pricing mechanisms. An
excellent survey of related work can be found
in~\cite{hartline2013bayesian}. We should also note that there is an
interesting connection of the techniques needed for the analysis in
the a priori popularity model with the literature on random
graphs~\cite{bollobas2001random,frieze2016introduction} and, in
particular, results regarding the multiplicity of the highest
in-degree in $G_{n,p}$ graphs. Unfortunately, such results have a
focus on asymptotics: for example, en route to proving bounds on the
chromatic number, Erd{\H{o}}s and Wilson~\cite{erdos1977chromatic}
showed that the maximum degree is unique with probability $1-o(1)$ in
$G_{n,1/2}$ graphs. Instead, for proving our approximation guarantees,
we need sharp estimations of the hidden $o(1)$ term. So, such results
are not directly applicable to our analysis.
 
\subsection{Roadmap}
The rest of the paper is structured as follows. We begin with
preliminary definitions and tail inequality statements in
Section~\ref{sec:prelim}. Section~\ref{sec:constant} is devoted to the
analysis of the constant mechanism in the opinion poll model. Our
polylogarithmic additive approximation for AVD is presented in
Section~\ref{sec:apriori} and the logarithmic lower bound in
Section~\ref{sec:lb-AVD}. We conclude with open problems in
Section~\ref{sec:open}. Two additional observations are given in the
appendix.



	\section{Preliminaries}\label{sec:prelim}
We denote by $N$ the set of individuals (or {\em agents}). For a set $S\subset N$, we use $N_S$ as an abbreviation of $N\setminus S$ and write $N_{i,...,j}$ instead of $N_{\{i,...,j\}}$ for simplicity.  A \emph{nomination profile} $G=(N,E)$ is a directed graph without self-loops that has the agents of $N$ as nodes. Each directed edge $ (i,j) \in E$ represents a nomination from agent $i$ to agent $j$.  Occasionally, we refer to the outgoing edges as $\emph{votes}$. We define as $x_i=\{ (i,j) \in E \}$ the set of outgoing edges from node $i \in N$ and use the tuple $\x=(x_1,...,x_n)$ as an alternative representation for $G$. We use $\x_{-i}$ to denote the graph $( N,E \setminus ( \{i\} \times N ) )$.  

Denoting by $\mathcal{G}$ the set of all nomination profiles over the agents of $N$, a (deterministic) {\em selection mechanism} is simply a function $f:\mathcal{G}\rightarrow N$ which maps each nomination profile to a single  node (the {\em winner}). A deterministic selection mechanism $f$ is {\em impartial} when for any agent $i\in N$, any graph $\x \in \mathcal{G}$ and any set $x'_i$ of outgoing edges from node $i$, it is $f(\x)=i$ if and only if $f(x'_i,\x_{-i})=i$. In other words, no agent (node) has any incentive to misreport her preferences (its outgoing edges).

We use $d_{j}(S,\x)$ to denote the in-degree of node $j \in N$, taking into account only the incoming edges from nodes of set $S$, given the profile $\x$, i.e.,
\[d_j(S,\x) = |\{i \in S: (i,j)\in E\}|.\] 
We use the simplified notations $d_j(\x)$ when $S=N_j$. $\Delta(\x)$ denotes the maximum in-degree of the profile $\x$, i.e.,  $\Delta(\x) =  \max_{j \in N} d_j(\x)$. Following the work of Caragiannis et al.~\cite{caragiannis2019impartial}, we evaluate the performance of a mechanism $f$ on a nomination profile $\x$ using the additive approximation $\Delta(\x)-d_{f(\x)}(\x)$, i.e., the difference between the maximum in-degree over all nodes and the in-degree of the winner returned by mechanism $f$. 

We assume that the input is a random nomination profile (among the agents of $N$), selected according to a probability distribution $\mathbf{P}$ over all such profiles. We assume {\em voter independence}, which means that the distribution $\mathbf{P}$ is a product $\prod_{i\in N}{\mathbf{P}_i}$ of independent distributions, where $\mathbf{P}_i$ denotes the distribution according to which node $i$ selects its set of outgoing edges.

We examine a hierarchy of three families of distributions, giving raise to {\em opinion poll}, {\em a priori popularity}, and {\em uniform} instances (or models), respectively:
\begin{itemize}
\item In the opinion poll model, each node $i\in N$ selects its set of outgoing edges among all possible edges to nodes of $N_i$, according to the probability distribution $\mathbf{P}_i$. Due to voter independence, the in-degree $d_j(\x)$ of each node $j$ is equal to the sum $\sum_{i\in N_j}{x_{ij}}$ of independent Bernoulli random variables, each denoting whether the directed edge from node $i$ to node $j$ exists in the nomination profile $(x_{ij}=1$) of not ($x_{ij}=0$). For simplicity of exposition, in our proofs, we consider $N$ to have $n+1$ agents; then, the in-degree of each node is the sum of $n$ independent random variables.
\item The a priori popularity model is the special case of opinion poll where each node $j$ has a popularity $p_j\in [0,1]$ and the directed edge $(i,j)$ exists in the nomination profile with probability $p_j$, independently on all other edges. In this case, the in-degree of node $j$ follows the binomial distribution $\bin(n,p_j)$, where $n$ denotes the number of trials and $p_j$ is the success probability for each trial. 
\item We call uniform the special case of the a priori popularity model with $p_j=p$ for every agent~$j$.
\end{itemize}
We assume that {\em prior information} about the underlying probability distributions is known in advance. Hence, we examine selection mechanisms that are defined using this information and evaluate them in terms of their expected additive approximation 
\[\E_{\x\sim{\P}}[\Delta(\x)-d_{f(\x)}(\x)].\]

We use the term $\alpha$-additive to refer to a selection mechanism with expected additive approximation at most $\alpha$.
Our aim is to design deterministic impartial selection mechanisms that have as low as possible expected additive approximation in any distribution from the above classes. Our positive results apply to opinion poll or to a priori popularity distributions; our proofs of negative results use the simplest uniform ones.

\subsection{Tail inequalities}
We include some tail bounds here that will be very useful later in our analysis.

\begin{lemma}[Hoeffding \cite{H63}]\label{lem:hoeffding}
	Let $X_1, X_2, ..., X_n$ be independent random variables so that  $\Prr{a_j\leq X_j \leq b_j} =1$. Then, the expectation of the random variable $X=\sum_{j=1}^n{X_j}$ is $\mathbb{E}[X]=\sum_{j=1}^n{\mathbb{E}[X_j]}$ and, furthermore, for every $\nu\geq 0$, $$\Prr{|X - \mathbb{E}[X]| \geq \nu}\leq 2\exp\left(-\frac{2\nu^2}{\sum_{j=1}^n{(b_j-a_j)^2}}\right).$$
\end{lemma}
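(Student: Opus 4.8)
The identity $\E[X]=\sum_{j=1}^n\E[X_j]$ is immediate from linearity of expectation, so the real content is the concentration bound. My plan is the exponential-moment (Chernoff) method. Fixing a parameter $t>0$ and applying Markov's inequality to the nonnegative variable $e^{t(X-\E[X])}$ gives
\[\Prr{X-\E[X]\geq\nu}\leq e^{-t\nu}\,\Exp{e^{t(X-\E[X])}}.\]
Writing $Y_j=X_j-\E[X_j]$, independence of the $X_j$ factors the moment generating function, $\Exp{e^{t(X-\E[X])}}=\prod_{j=1}^n\Exp{e^{tY_j}}$, so the whole problem reduces to bounding the MGF of a single centered, bounded random variable.

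The heart of the argument is the auxiliary estimate (Hoeffding's lemma): if $\E[Y]=0$ and $a\leq Y\leq b$ almost surely, then $\Exp{e^{tY}}\leq\exp\!\big(t^2(b-a)^2/8\big)$. I would prove it using convexity of $y\mapsto e^{ty}$ to dominate it on $[a,b]$ by its chord, $e^{ty}\leq\frac{b-y}{b-a}e^{ta}+\frac{y-a}{b-a}e^{tb}$; taking expectations and using $\E[Y]=0$ yields $\Exp{e^{tY}}\leq(1-s_0)e^{ta}+s_0e^{tb}=e^{\varphi(t)}$, where $s_0=-a/(b-a)$ and $\varphi(t)=ta+\ln\!\big(1-s_0+s_0e^{t(b-a)}\big)$. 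A direct computation gives $\varphi(0)=\varphi'(0)=0$ and $\varphi''(t)=(b-a)^2\,s(1-s)$ for a weight $s\in[0,1]$ depending on $t$, whence $\varphi''\leq(b-a)^2/4$ since $s(1-s)\leq1/4$. Taylor's theorem then gives $\varphi(t)=\frac{1}{2}\varphi''(\xi)t^2\leq t^2(b-a)^2/8$ for some $\xi\in[0,t]$. This MGF bound is the main obstacle: it is the only place the boundedness hypothesis enters quantitatively, and the constant $1/8$ (hence the final exponent) is determined here.

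Setting $S=\sum_{j=1}^n(b_j-a_j)^2$, the per-variable bound gives $\Exp{e^{t(X-\E[X])}}\leq\exp(t^2S/8)$ and therefore $\Prr{X-\E[X]\geq\nu}\leq\exp\!\big(-t\nu+t^2S/8\big)$ for every $t>0$. I would then optimize the free parameter: the choice $t=4\nu/S$ minimizes the exponent and produces the one-sided bound $\Prr{X-\E[X]\geq\nu}\leq\exp(-2\nu^2/S)$. Applying the identical argument to $-X$ (whose summands $-X_j\in[-b_j,-a_j]$ have the same range widths) bounds the lower tail by the same quantity, and a union bound over the two tails contributes the factor $2$, giving $\Prr{|X-\E[X]|\geq\nu}\leq2\exp(-2\nu^2/S)$, as claimed.
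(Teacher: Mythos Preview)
Your proof is correct and follows the standard exponential-moment route to Hoeffding's inequality: Markov on $e^{t(X-\E[X])}$, independence to factor the MGF, Hoeffding's lemma via convexity and a Taylor estimate on $\varphi$, optimization in $t$, and a union bound for the two-sided statement. There is nothing to compare against, however: the paper does not prove this lemma at all but simply quotes it as a known result with a citation to Hoeffding~\cite{H63}. Your write-up is exactly the classical argument one would find in that reference or in standard probability texts.
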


\begin{lemma}[Chernoff bounds]\label{lem:chernoff}
	Let $B\sim\bin(n,p)$ and $\mu=np$. Then, the following inequalities hold 
	\begin{itemize}
		\item Let $x\geq \mu$. Then 
		\begin{align}\label{eq:okamoto}
		\Pr[B\geq x] &\leq \exp\left(-\frac{(x-\mu)^2n}{2\mu(n-\mu)}\right)
		\end{align}
		if $\mu\geq n/2$, and
		\begin{align}\label{eq:chernoff-upper}
		\Pr[B\geq x] &\leq \exp\left(-\frac{(x-\mu)^2}{3\mu}\right)
		\end{align}
		if $\mu<n/2$ and, furthermore, $x\leq 2\mu$.
		\item Let $x\leq \mu$. Then, 
		\begin{align}\label{eq:okamoto-lower}
		\Pr[B\leq x] &\leq \exp\left(-\frac{(\mu-x)^2n}{2\mu(n-\mu)}\right)
		\end{align}
		if $\mu\leq n/2$, and
		\begin{align}\label{eq:chernoff-lower}
		\Pr[B\leq x] &\leq \exp\left(-\frac{(\mu-x)^2}{3(n-\mu)}\right)
		\end{align}
		if $\mu>n/2$ and, furthermore, $x\geq 2\mu-n$.
	\end{itemize}
\end{lemma}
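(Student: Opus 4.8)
The plan is to reduce all four inequalities to a single exact tail estimate for the binomial and then specialize. First I would apply the exponential-moment (Chernoff) method: since $B\sim\bin(n,p)$ is a sum of $n$ independent Bernoulli$(p)$ variables, for any $\lambda>0$ we have $\Pr[B\geq x]\leq e^{-\lambda x}\,\E[e^{\lambda B}]=e^{-\lambda x}(1-p+pe^{\lambda})^{n}$; optimizing over $\lambda$ yields the relative-entropy form $\Pr[B\geq x]\leq\exp\!\big(-n\,D(x/n\,\|\,p)\big)$, where $D(a\,\|\,p)=a\ln\frac{a}{p}+(1-a)\ln\frac{1-a}{1-p}$ is the Kullback--Leibler divergence, and symmetrically $\Pr[B\leq x]\leq\exp\!\big(-n\,D(x/n\,\|\,p)\big)$ for $x\leq\mu$. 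Each of the four displayed bounds then follows from a suitable lower bound on $D$ in the regime dictated by its hypotheses.

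For the Okamoto-type bounds~\eqref{eq:okamoto} and~\eqref{eq:okamoto-lower} I would expand $D$ by Taylor's theorem with integral remainder. Since $\frac{d^2}{dq^2}D(q\,\|\,p)=\frac{1}{q(1-q)}$ while $D$ and its first derivative vanish at $q=p$, we get $D(q\,\|\,p)=\int_p^q\frac{q-u}{u(1-u)}\,du$. When $\mu\geq n/2$ (so $p\geq 1/2$) and $q=x/n\geq p$, every $u$ in the interval $[p,q]$ satisfies $u(1-u)\leq p(1-p)$, hence $\frac{1}{u(1-u)}\geq\frac{1}{p(1-p)}$ and $D(q\,\|\,p)\geq\frac{(q-p)^2}{2p(1-p)}$. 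Substituting $q=x/n$ and using $np(1-p)=\mu(n-\mu)/n$ turns this into exactly~\eqref{eq:okamoto}; the mirror-image computation on $[q,p]$ under $\mu\leq n/2$ gives~\eqref{eq:okamoto-lower}.

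For the multiplicative bound~\eqref{eq:chernoff-upper} I would instead write $x=(1+\delta)\mu$ with $\delta=(x-\mu)/\mu$, so that the side condition $x\leq 2\mu$ is precisely $\delta\leq 1$, and invoke the standard estimate $D\big((1+\delta)p\,\|\,p\big)\geq p\,\delta^2/3$ valid for $\delta\in[0,1]$; this collapses to $\exp(-(x-\mu)^2/(3\mu))$. Finally,~\eqref{eq:chernoff-lower} follows from~\eqref{eq:chernoff-upper} by the complement trick: set $B'=n-B\sim\bin(n,1-p)$ with mean $\mu'=n-\mu$, so that $\{B\leq x\}=\{B'\geq n-x\}$. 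The hypothesis $\mu>n/2$ becomes $\mu'<n/2$, and $x\geq 2\mu-n$ becomes $n-x\leq 2\mu'$, so~\eqref{eq:chernoff-upper} applies to $B'$ verbatim and, after simplifying $(n-x)-\mu'=\mu-x$ and $\mu'=n-\mu$, produces exactly $\exp(-(\mu-x)^2/(3(n-\mu)))$.

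I expect the only real obstacle to be bookkeeping rather than any deep step: one must check that each case hypothesis is exactly the condition under which the corresponding divergence bound holds. Concretely, $\mu\geq n/2$ is what keeps $\frac{1}{u(1-u)}$ above its value at $p$ throughout $[p,x/n]$ for the Okamoto bounds, while the restrictions $x\leq 2\mu$ and $x\geq 2\mu-n$ are exactly the ranges in which the cubic-denominator estimate $p\,\delta^2/3$ (with $\delta\leq 1$) is legitimate. Verifying the constant $3$ in that estimate and confirming that the complement reduction preserves all side conditions are the fiddly points; everything else is a direct substitution.
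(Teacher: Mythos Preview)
Your argument is correct: the relative-entropy (Chernoff--Hoeffding) bound $\Pr[B\geq x]\leq\exp(-nD(x/n\,\|\,p))$ and its lower-tail analogue are standard, the second-order Taylor/integral-remainder lower bound on $D$ with $D''(q\,\|\,p)=\tfrac{1}{q(1-q)}$ does give the Okamoto inequalities exactly under the stated hypotheses on $\mu$ versus $n/2$, the multiplicative estimate $nD((1+\delta)p\,\|\,p)\geq\mu\delta^2/3$ for $\delta\in[0,1]$ yields~\eqref{eq:chernoff-upper}, and the complement substitution $B'=n-B$ transfers~\eqref{eq:chernoff-upper} to~\eqref{eq:chernoff-lower} with all side conditions matching up as you describe.

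As to comparison with the paper: there is nothing to compare. The paper does not prove Lemma~\ref{lem:chernoff}; it only states the four inequalities and attributes~\eqref{eq:chernoff-upper} and~\eqref{eq:chernoff-lower} to standard Chernoff-bound references and~\eqref{eq:okamoto} and~\eqref{eq:okamoto-lower} to Okamoto. Your proposal therefore supplies a self-contained proof where the paper supplies none, and the unified route through the KL form is a clean way to obtain all four cases at once.
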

Inequalities (\ref{eq:chernoff-upper}) and (\ref{eq:chernoff-lower}) are the standard Chernoff bounds; e.g., see \cite{MR95}. Inequalities (\ref{eq:okamoto}) and (\ref{eq:okamoto-lower}) are due to Okamoto~\cite{O58}. The following lemma (see \cite{Ash90}, Lemma~4.7.2, page 116]) indicates that Chernoff bounds are asymptotically tight. 
\begin{lemma}\label{lem:reverse-chernoff}
	Let $B\sim \bin(n,p)$ and $\delta\in [0,1-p)$. Then,
	\begin{align*}
	\Pr[B \geq n(p+\delta)] &\geq \frac{1}{\sqrt{8n(p+\delta)(1-p-\delta)}} \cdot \left( \left(\frac{p}{p+\delta}\right)^{p+\delta} \left(\frac{1-p}{1-p-\delta}\right)^{1-p-\delta}\right)^n.
	\end{align*}
\end{lemma}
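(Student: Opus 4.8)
This is a standard \emph{reverse} (or \emph{inverse}) Chernoff bound, quantifying the tightness of the upper bounds in Lemma~\ref{lem:chernoff}; it appears as Lemma~4.7.2 in~\cite{Ash90}, and the plan below reconstructs its proof. The starting point is the elementary observation that a tail probability is at least any single one of its terms. Writing $q=p+\delta$ and letting $k=\lceil nq\rceil$ be the smallest integer at least $nq$, we have
\[
\Pr[B\geq nq]=\Pr[B\geq k]\geq \Pr[B=k]=\binom{n}{k}p^{k}(1-p)^{n-k},
\]
so it suffices to lower bound this leading binomial term and to show that it already carries the claimed entropy factor and $1/\sqrt{n}$ prefactor.

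The second step is to estimate $\binom{n}{k}$ by Stirling's formula with explicit error control, $m!=\sqrt{2\pi m}\,(m/e)^{m}e^{\lambda_m}$ with $\tfrac{1}{12m+1}<\lambda_m<\tfrac{1}{12m}$. Applying this to $n!$, $k!$, and $(n-k)!$, the powers of $e$ cancel and the factorials collapse into the binary-entropy expression: with $r=k/n$,
\[
\binom{n}{k}\geq \frac{1}{\sqrt{2\pi n\,r(1-r)}}\left(\frac{1}{r^{r}(1-r)^{1-r}}\right)^{n} e^{-\frac{1}{12k}-\frac{1}{12(n-k)}}.
\]
Multiplying by $p^{k}(1-p)^{n-k}=\big(p^{r}(1-p)^{1-r}\big)^{n}$ turns the entropy factor into the Kullback--Leibler form $\big((p/r)^{r}((1-p)/(1-r))^{1-r}\big)^{n}=e^{-nD(r\|p)}$, giving a lower bound on $\Pr[B=k]$ of exactly the shape claimed, but with $r=k/n$ in place of $q=p+\delta$.

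The third step is to pass from $r=k/n$ back to $q=p+\delta$ and to collect the constants. Since $0\leq r-q<1/n$ and $D(\cdot\|p)$ is increasing to the right of $p$, the exponent satisfies $n\big(D(r\|p)-D(q\|p)\big)\leq (r-q)\,n\cdot\sup_{\xi}\partial_\xi D(\xi\|p)$, which is bounded because $r-q<1/n$; likewise $r(1-r)$ and $q(1-q)$ differ by $O(1/n)$. Folding these bounded corrections together with the prefactor $1/\sqrt{2\pi}$ and the Stirling error factor $e^{-1/(12k)-1/(12(n-k))}$ into the single universal constant $1/\sqrt{8}<1/\sqrt{2\pi}$ yields the stated inequality.

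The main obstacle is precisely this last bookkeeping. The substitution $r\mapsto q$ runs \emph{against} the monotonicity of the divergence, so a naive replacement loses rather than gains, and one must check that the loss is a bounded multiplicative constant uniformly in $p$ and $\delta$ with $\delta\in[0,1-p)$; this is delicate as $p+\delta\to 1$, exactly where the $\sqrt{1-p-\delta}$ factor in the denominator is also degenerating. Nailing the clean constant $\sqrt{8}$ in the boundary regimes, where $k$ or $n-k$ is small and the Stirling correction is not close to $1$, is the part that requires care; a robust alternative is to lower bound the tail not by a single term but by a short window $\sum_{k\leq j\leq k+O(\sqrt{n})}\Pr[B=j]$, which smooths out the rounding at the cost of a slightly longer computation.
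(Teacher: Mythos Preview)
The paper does not give its own proof of this lemma: it is stated as a known result and attributed to Ash~\cite{Ash90}, Lemma~4.7.2, with no argument supplied. So there is nothing in the paper to compare your proposal against; your write-up is a reconstruction of the textbook proof, and the outline (lower bound the tail by the single term $\Pr[B=k]$ with $k=\lceil nq\rceil$, apply Stirling with explicit error terms, and rewrite the resulting expression in Kullback--Leibler form) is indeed the standard route.

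One caution about your third step. You argue that the rounding loss $n\big(D(r\|p)-D(q\|p)\big)$ is bounded because $r-q<1/n$, but the factor you are suppressing is $\partial_\xi D(\xi\|p)=\ln\frac{\xi(1-p)}{p(1-\xi)}$, which diverges as $\xi\to 1$. So the product $(r-q)\cdot n\cdot \partial_\xi D(\xi\|p)$ is \emph{not} uniformly bounded over $\delta\in[0,1-p)$, and the claim that everything folds into the single constant $\sqrt{8}$ does not go through as written near the boundary $p+\delta\to 1$. In Ash's presentation this issue is sidestepped by effectively taking $k=nq$ to be an integer (the statement is for integer $k$, then reparametrised), so no rounding correction arises; if you want to keep the continuous parameter $\delta$ as in the lemma statement, the cleanest fix is the one you mention at the end, namely summing over a short window of terms rather than keeping only $\Pr[B=k]$, or alternatively treating the boundary regime $n-k=O(1)$ by a direct estimate. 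Your diagnosis of where the difficulty lies is accurate; it is only the assertion that the loss is uniformly bounded that needs repair.
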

In particular, we will utilize the following cleaner statement, which follows easily by Lemma~\ref{lem:reverse-chernoff}. 
\begin{corollary}[Inverse Chernoff bound]\label{cor:inverse}
	Let $B\sim\bin(n,1/2)$ and $\delta\in [0,1/10]$. Then,
	\begin{align*}
	\Pr\left[B\geq n\left(\frac{1}{2}+\delta\right)\right] &\geq \frac{1}{\sqrt{2n}}\exp\left(-3\delta^2n\right)
	\end{align*}
\end{corollary}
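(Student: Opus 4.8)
The plan is to specialize Lemma~\ref{lem:reverse-chernoff} to $p=1/2$ and then bound its two factors separately, using the restriction $\delta\in[0,1/10]$ only to control the error terms.

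First I would substitute $p=1/2$ into the prefactor. Since $(p+\delta)(1-p-\delta)=\left(\frac{1}{2}+\delta\right)\left(\frac{1}{2}-\delta\right)=\frac{1}{4}-\delta^2\leq\frac{1}{4}$, the denominator satisfies $8n(p+\delta)(1-p-\delta)\leq 2n$, and therefore the prefactor $\frac{1}{\sqrt{8n(p+\delta)(1-p-\delta)}}$ is at least $\frac{1}{\sqrt{2n}}$. This already produces the leading factor of the target bound, and it uses nothing about $\delta$ beyond $\delta\geq 0$.

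Next I would turn to the exponential base. Writing $\frac{1/2}{1/2+\delta}=\frac{1}{1+2\delta}$ and $\frac{1/2}{1/2-\delta}=\frac{1}{1-2\delta}$, the base equals $\exp(-g(\delta))$, where
\[ g(\delta)=\left(\frac{1}{2}+\delta\right)\ln(1+2\delta)+\left(\frac{1}{2}-\delta\right)\ln(1-2\delta). \]
It then suffices to show $g(\delta)\leq 3\delta^2$, since this yields $(\text{base})^n=\exp(-n\,g(\delta))\geq\exp(-3\delta^2 n)$, and multiplying by the prefactor bound finishes the proof.

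The estimate $g(\delta)\leq 3\delta^2$ is the crux, and it is the only place the hypothesis $\delta\leq 1/10$ is needed. I expect this calculus step to be the main obstacle: the naive bounds $\ln(1\pm 2\delta)\leq\pm 2\delta$ give only $g(\delta)\leq 4\delta^2$, which is too weak, so a genuine second-order argument is required. Setting $u=2\delta$ and $h(u)=(1+u)\ln(1+u)+(1-u)\ln(1-u)=2g(\delta)$, I would check $h(0)=h'(0)=0$ and $h''(t)=\frac{2}{1-t^2}$. By Taylor's theorem with integral remainder, $h(u)=\int_0^u (u-t)\,h''(t)\,dt$, and since $u=2\delta\leq 1/5$ the factor $h''(t)$ is at most $\frac{2}{1-(1/5)^2}=\frac{25}{12}$ on $[0,u]$. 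Hence $h(u)\leq\frac{25}{12}\cdot\frac{u^2}{2}$, so $g(\delta)=\frac{1}{2}h(u)\leq\frac{25}{48}(2\delta)^2=\frac{25}{12}\delta^2<3\delta^2$, as required. All the probabilistic content is carried by Lemma~\ref{lem:reverse-chernoff}; the remainder is this deterministic inequality together with the elementary prefactor bound.
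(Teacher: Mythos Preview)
Your proof is correct and follows the same overall structure as the paper's: specialize Lemma~\ref{lem:reverse-chernoff} to $p=1/2$, bound the prefactor by $(p+\delta)(1-p-\delta)\leq 1/4$, and then show the base raised to the $n$ is at least $\exp(-3\delta^2 n)$. The only difference is in how that last inequality is established. The paper first rewrites the base algebraically as $\frac{1}{\sqrt{1-4\delta^2}}\left(\frac{1/2-\delta}{1/2+\delta}\right)^{\delta}$ and then applies $1+z\leq e^z$ twice to obtain the intermediate bound $\exp\!\left(\frac{-2\delta^2-4\delta^3}{1-2\delta}\,n\right)$, which is at least $\exp(-3\delta^2 n)$ when $\delta\leq 1/10$. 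You instead take logarithms and bound $g(\delta)$ directly via Taylor's theorem with integral remainder, using $h''(t)=2/(1-t^2)\leq 25/12$ on $[0,1/5]$ to get $g(\delta)\leq \tfrac{25}{12}\delta^2<3\delta^2$. Your route is arguably cleaner and even yields a slightly sharper constant, but both arguments are short and rely on $\delta\leq 1/10$ only at this final step; there is no substantive difference in strategy.
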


\begin{proof}
By applying Lemma~\ref{lem:reverse-chernoff} to the random variable $B$, we have
		\begin{align*}
		\Pr\left[B\geq n\left(\frac{1}{2}+\delta\right)\right] &\geq \frac{1}{\sqrt{2n}}\left(\left(\frac{1/2}{1/2+\delta}\right)^{1/2+\delta}\left(\frac{1/2}{1/2-\delta}\right)^{1/2-\delta}\right)^n\\ 
		&= \frac{1}{\sqrt{2n}}\left(\frac{1}{\sqrt{1-4\delta^2}}\left(\frac{1/2-\delta}{1/2+\delta}\right)^{\delta}\right)^n\\
		&\geq \frac{1}{\sqrt{2n}} \exp\left(\frac{-2\delta^2-4\delta^3}{1-2\delta}n\right)  \geq \frac{1}{\sqrt{2n}} \exp\left(-3\delta^2n\right).
		\end{align*}
		The first inequality follows by Lemma~\ref{lem:reverse-chernoff} and since $(p+\delta)(1-p-\delta)$ is at most $1/4$. The second inequality follows by the inequality $e^z\geq 1+z$ for $z\in \R$ which implies that $\sqrt{1-4\delta^2}\leq \exp(-2\delta^2)$ and $\frac{1/2+\delta}{1/2-\delta}\leq \exp\left(\frac{4\delta}{1-2\delta}\right)$. The third inequality follows since $\delta\leq 1/10$.
	\end{proof}



\section{Warming up: the constant mechanism}
\label{sec:constant}
We first consider a simple mechanism, which we call the {\em constant} mechanism. This mechanism ignores all edges and awards a particular preselected node, which we call the {\em default winner} (or default node). The selection of the default winner depends only on the prior. For example, the criterion that we consider here is to select as default winner a node of maximum expected in-degree, i.e., 
\begin{align*}
f_{\co} \in \argmax_{v\in N}{\E[d_v(\x)]}.
\end{align*}
Our first statement is an upper bound on the additive approximation of the constant mechanism; its proof 
follows by a simple application of the Hoeffding bound (Lemma~\ref{lem:hoeffding}).


\begin{theorem}\label{UB:constant}
	For opinion poll inputs, the constant mechanism that uses the maximum expected in-degree node as the default winner has expected additive approximation $\bigO\left(\sqrt{n \ln{n}}\right)$. 
\end{theorem}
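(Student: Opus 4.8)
The plan is to reduce the expected additive approximation to the expected maximum of a family of centered, concentrated random variables, which I then control with the sub-Gaussian tail that Hoeffding's inequality provides. Let $w$ denote the fixed default node chosen by the constant mechanism, i.e.\ a maximizer of $\E[d_v(\x)]$, and write $\mu_j=\E[d_j(\x)]$ and $\mu^\star=\max_j\mu_j=\mu_w$. Since the winner is always $w$, independently of $\x$, the expected additive approximation equals $\E[\Delta(\x)]-\mu^\star=\E\left[\max_j d_j(\x)\right]-\mu^\star$. The first step is the deterministic (pointwise) inequality $\max_j d_j(\x)\le \mu^\star+\max_j\bigl(d_j(\x)-\mu_j\bigr)$, valid because $d_j=(d_j-\mu_j)+\mu_j\le (d_j-\mu_j)+\mu^\star$ for each $j$. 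Taking expectations reduces the whole problem to bounding $\E\bigl[\max_j(d_j(\x)-\mu_j)\bigr]$ by $\bigO(\sqrt{n\ln n})$.

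For the second step, set $Y_j:=d_j(\x)-\mu_j$; there are $n+1$ such variables and each is a centered sum of $n$ independent Bernoulli variables. Hoeffding's inequality (Lemma~\ref{lem:hoeffding}) with $a_i=0$, $b_i=1$, so that $\sum_i(b_i-a_i)^2=n$, yields the one-sided tail $\Pr[Y_j\ge t]\le \exp(-2t^2/n)$. I would then invoke the standard bound $\E[\max_j Y_j]\le \int_0^\infty \Pr[\max_j Y_j\ge t]\,dt$ together with the union bound $\Pr[\max_j Y_j\ge t]\le \min\{1,(n+1)\exp(-2t^2/n)\}$, and split the integral at the crossover point $t_0=\sqrt{(n\ln(n+1))/2}$. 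The portion over $[0,t_0]$ contributes at most $t_0=\bigO(\sqrt{n\ln n})$, while the tail $\int_{t_0}^\infty (n+1)\exp(-2t^2/n)\,dt$ is bounded via the Gaussian estimate $\int_{t_0}^\infty \exp(-2t^2/n)\,dt\le \frac{n}{4t_0}\exp(-2t_0^2/n)$; since $(n+1)\exp(-2t_0^2/n)=1$ by the choice of $t_0$, this equals $\frac{n}{4t_0}=\bigO(\sqrt{n/\ln n})$. Both pieces are $\bigO(\sqrt{n\ln n})$, completing the proof.

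I expect no genuinely hard step: this is the textbook estimate for the expected maximum of $\bigO(n)$ sub-Gaussian variables, which is why the authors call it a simple application of Hoeffding. The one place that needs care — and the actual content of the statement — is the centering in the first paragraph: comparing $\max_j d_j$ against the \emph{maximum of the means} $\mu^\star$ (rather than a single fixed mean) is precisely what lets the choice of default node as the $\E[d_v]$-maximizer pay off, and without recentering each $d_j$ at its own $\mu_j$ the union bound would not deliver the sub-Gaussian savings. As an alternative to the explicit integral, one can quote the off-the-shelf inequality that the expected maximum of $m$ sub-Gaussian variables with proxy variance $\sigma^2=n/4$ is at most $\sigma\sqrt{2\ln m}=\sqrt{(n\ln(n+1))/2}$, giving the same bound with an explicit constant.
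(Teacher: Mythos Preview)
Your proof is correct and follows essentially the same approach as the paper: recenter each $d_j$ at its own mean (so that the maximizer-of-means choice of $w$ pays off), apply Hoeffding and a union bound, and conclude that $\E[\max_j d_j]-\mu^\star=\bigO(\sqrt{n\ln n})$. The only cosmetic difference is that the paper picks a single threshold $\sqrt{n\ln n}$ and uses the crude bound $n$ on the complementary low-probability event, whereas you integrate the tail via the layer-cake formula; both deliver the same $\bigO(\sqrt{n\ln n})$ with comparable constants.
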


\begin{proof} Recall that, in the opinion poll model, the in-degree of node $v$ is the sum of $n$ independent Bernoulli random variables, i.e., $d_v(\x)=\sum_{u\in N_{v}}{x_{uv}}$. Then, a simple application of the Hoeffding bound (Lemma~\ref{lem:hoeffding}) yields
	\begin{align*}
	\Pr\left[d_v(\x)\geq \E[d_v(\x)]+\sqrt{n\ln{n}}\right] & \leq  \Pr\left[|d_v(\x)- \E[d_v(\x)]|\geq \sqrt{n\ln{n}}\right] \leq \frac{2}{n^2}.
	\end{align*}
	Hence, the probability that some node has in-degree at least $\E[d_{f_{\co}}(\x)]+\sqrt{n\ln{n}}$ is at most the probability that some node $v$ has in-degree at least $\E[d_v(\x)]+\sqrt{n\ln{n}}$. By the inequality above and the union bound, this probability is at most $\frac{2}{n^2}\cdot(n+1)\leq \frac{3}{n}$. Thus, the expected maximum in-degree is 
	\begin{align*}
	\E[\Delta(\x)] 	& \leq \E[d_{f_{\co}}(\x)]+\sqrt{n\ln{n}}+n \cdot \frac{3}{n} \leq \E[d_{f_{\co}}(\x)]+3+\sqrt{n\ln{n}},
	\end{align*}
	and the expected additive approximation $\E[\Delta(\x)-d_{f_{\co}}(\x)]$ is no more than $3+\sqrt{n\ln{n}}$.
\end{proof}

The bound in Theorem~\ref{UB:constant} is asymptotically tight. The lower bound instances that we use in the proof of the next statement are the simplest ones: uniform instances with $p=1/2$. Consequently, it holds for any selection of the default winner. The proof exploits the reverse Chernoff bound (Corollary~\ref{cor:inverse}).
 
\begin{theorem}\label{LB:GeneralmodelLB}
	The constant mechanism has expected additive approximation $\Omega\left(\sqrt{n \ln{n}}\right)$, even when applied to uniform inputs.
\end{theorem}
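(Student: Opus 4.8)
The plan is to show a matching lower bound for the constant mechanism on uniform inputs with $p=1/2$, where $N$ has $n+1$ nodes and each of the $n+1$ in-degrees is an independent $\bin(n,1/2)$ random variable. Since the instance is uniform, all nodes have identical expected in-degree $n/2$, so \emph{any} choice of default node is equivalent; fix the default to be an arbitrary node $w$. The expected additive approximation is then $\E[\Delta(\x)-d_w(\x)]=\E[\Delta(\x)]-n/2$, so it suffices to prove that $\E[\Delta(\x)]\geq n/2+\Omega(\sqrt{n\ln n})$.

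First I would establish the required deviation: I want to exhibit a threshold $t=n/2+c\sqrt{n\ln n}$ for a suitable small constant $c$ such that, with constant probability, at least one of the $n+1$ independent in-degrees exceeds $t$. Writing $t=n(1/2+\delta)$ with $\delta=c\sqrt{(\ln n)/n}$ (which lies in $[0,1/10]$ for large $n$), the inverse Chernoff bound of Corollary~\ref{cor:inverse} gives, for a single node $v$,
\[
\Pr\left[d_v(\x)\geq n\left(\tfrac12+\delta\right)\right]\;\geq\;\frac{1}{\sqrt{2n}}\exp\!\left(-3\delta^2 n\right)\;=\;\frac{1}{\sqrt{2n}}\,n^{-3c^2}.
\]
Choosing $c$ small enough (e.g. so that $3c^2<1/2$) makes this lower bound polynomially large, of order $n^{-1/2-3c^2}$, which is $\omega(1/n)$.

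Next I would amplify over the $n+1$ independent nodes. Because the in-degrees are independent, the probability that \emph{no} node reaches the threshold is $\bigl(1-q\bigr)^{n+1}$ where $q$ is the single-node probability above; since $q\geq n^{-1/2-3c^2}$ and $(n+1)q\to\infty$, this product tends to $0$, so $\Delta(\x)\geq n(1/2+\delta)$ with probability approaching $1$ (certainly at least a constant, say $1/2$, for large $n$). Conditioning on this event contributes at least $\tfrac12\cdot c\sqrt{n\ln n}$ to $\E[\Delta(\x)]-n/2$, and on the complementary event $\Delta(\x)\geq n/2$ anyway since the average in-degree is $n/2$ and the maximum is at least the value at any node in expectation; more carefully, $\Delta(\x)-n/2\geq 0$ deterministically is not guaranteed, but $\E[\Delta(\x)]\geq \E[d_w(\x)]=n/2$ trivially bounds the complementary contribution from below by a nonnegative quantity. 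Hence $\E[\Delta(\x)]-n/2=\Omega(\sqrt{n\ln n})$.

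The main obstacle is the delicate constant-balancing in the amplification step: I must choose $c$ small enough that the per-node success probability $q$ decays slower than $1/n$ (so that $(n+1)q\to\infty$ and the union of events has probability bounded away from zero), yet the surviving deviation $c\sqrt{n\ln n}$ is still $\Omega(\sqrt{n\ln n})$. The inverse Chernoff bound is precisely the tool that makes this possible, since it lower-bounds the tail by the $n^{-3c^2}/\sqrt{2n}$ factor whose exponent is controlled linearly by $c^2$. A secondary technical point is handling the complementary event cleanly without accidentally subtracting a positive quantity; I would phrase the argument so that the low-probability regime contributes a nonnegative amount, which is immediate once the threshold event is isolated.
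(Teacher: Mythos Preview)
Your plan is essentially the paper's: uniform $p=1/2$, apply the inverse Chernoff bound (Corollary~\ref{cor:inverse}) to a single in-degree, and amplify over the independent nodes. The one place where your write-up goes wrong is the handling of the complementary event. If you decompose $\E[\Delta(\x)]-n/2=\E[(\Delta(\x)-n/2)\one\{E\}]+\E[(\Delta(\x)-n/2)\one\{\overline E\}]$, then on $\overline E$ the integrand can be as negative as $-n/2$; asserting only $\Pr[E]\ge 1/2$ is therefore not enough, since the second term could be of order $-n/4$ and swamp the $\tfrac{c}{2}\sqrt{n\ln n}$ you extracted from the first. Your fallback sentence ``$\E[\Delta(\x)]\ge n/2$ trivially bounds the complementary contribution from below by a nonnegative quantity'' is circular: it invokes the global inequality to control one piece of its own decomposition.

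Two clean repairs are available. First, you already have much more than $\Pr[E]\ge 1/2$: with $3c^2<1/2$ you get $\Pr[\overline E]\le\exp\!\bigl(-\Theta(n^{1/2-3c^2})\bigr)$, so $(n/2)\Pr[\overline E]=o(1)$ and the negative piece is harmless---but you must actually use this decay, not the constant lower bound. Second, and this is what the paper does, define the event $\mathcal E$ over the \emph{non-default} nodes only (``some $v\ne w$ has $d_v(\x)\ge t$''), so that $\mathcal E$ is independent of $d_w(\x)$. Then the pointwise inequality $\Delta(\x)\ge \max_{v\ne w}d_v(\x)\cdot\one\{\mathcal E\}+d_w(\x)\cdot\one\{\overline{\mathcal E}\}$ gives $\E[\Delta(\x)]\ge t\,\Pr[\mathcal E]+\tfrac{n}{2}\,\Pr[\overline{\mathcal E}]=\tfrac{n}{2}+(t-\tfrac{n}{2})\Pr[\mathcal E]$, which needs only $\Pr[\mathcal E]$ bounded away from $0$. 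The paper takes $\delta=\sqrt{(\ln n)/(6n)}$, obtains $\Pr[\mathcal E]\ge\sqrt 2-1$ from $(1-1/(n\sqrt 2))^n\le e^{-1/\sqrt 2}$, and concludes directly.
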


\begin{proof}
	Consider a uniform prior with $p=1/2$ over $n+1$ nodes, where $n$ is large, e.g., $n\geq 80$. Then, the in-degree of any node $u$ is a random variable following the binomial probability distribution $\bin(n,1/2)$. Let $u^*=f_{\co}$ be the node returned by the constant mechanism; clearly, $\E[d_{u^*}(\x)]=n/2$. Denote by $\mathcal{E}$ the event that some node different than $u^*$ has in-degree at least $\frac{n}{2}+\sqrt{\frac{n\ln{n}}{6}}$. By applying Corollary~\ref{cor:inverse} with $\delta=\sqrt{\frac{\ln{n}}{6n}}$ (the fact that $n$ is large guarantees that $\delta\leq 1/10$) to the random variable $d_u(\x)$, we have  
	\begin{align*}
	\Pr\left[d_u(\x) \geq \frac{n}{2}+\sqrt{\frac{n\ln{n}}{6}}\right] &\geq \frac{1}{n\sqrt{2}},
	\end{align*}
	for every node $u\not=u^*$ and, hence,
	\begin{align*}
	\Pr[\mathcal{E}] &\geq 1-\left(1-\frac{1}{n\sqrt{2}}\right)^n \geq 1-e^{-1/\sqrt{2}}\geq \sqrt{2}-1,
	\end{align*}
	where the second inequality follows by the inequality $(1-r/n)^n\leq e^{-r}$ and the third one by the inequality $e^z\geq 1+z$ (and, thus, $e^{1/\sqrt{2}}\geq 1+1/\sqrt{2}$). We now have 
	\begin{align*}
	\E[\Delta(\x)] & \geq \E[\max_{u\not=u^*}{d_u(\x)}\one\{\mathcal{E}\}]+\E[d_{u^*}(\x) \one\{\overline{\mathcal{E}}\}] \geq \left(\frac{n}{2}+\sqrt{\frac{n\ln{n}}{6}}\right)\Pr[\mathcal{E}] +\E[d_{u^*}(\x)]\Pr[\overline{\mathcal{E}}]\\
	&= \E[d_{u^*}(\x)]+\sqrt{\frac{n\ln{n}}{6}} \cdot\Pr[\mathcal{E}]\geq \E[d_{u^*}(\x)]+\frac{1}{6}\sqrt{n\ln{n}},
	\end{align*}
	and the desired lower bound on the expected additive approximation $\E[\Delta(\x)-d_{u^*}(\x)]$ follows.
\end{proof}

\section{A priori popularity and the AVD mechanism}\label{sec:apriori}
We devote this section to AVD mechanism and its analysis on a priori popularity instances. AVD uses a preselected node $t$ as the default winner. To give a formal definition of the mechanism, we say that a non-default node $k$ {\em beats} another non-default node $j$ in the nomination profile $\x$ if $d_k(N_{j,k,t},\x)>d_j(N_{j,k,t},\x)$, i.e., if node $k$ has higher in-degree than node $j$ when ignoring incoming edges from nodes $j$, $k$, and the default node $t$. Node $k$ beats (respectively, is beaten by) the default node $t$ if $d_k(N_{k,t},\x)>d_t(N_{k,t},\x)$ (respectively, $d_k(N_{k,t},\x)<d_t(N_{k,t},\x)$). When applied on the nomination profile $\x$, AVD returns as the winner $w$ the node that beats every other node, or the default node if no node that beats every other node exists.\footnote{The case in which no node beats every other node refines the notion of a tie that we informally used in Section~\ref{sec:intro}.} We remark that the default node is not prohibited to win by beating every other node.

Notice that, by misreporting its outgoing edges, a node cannot affect the set of other nodes it beats. Hence, AVD is clearly impartial. In addition, the above formal definition allows us to observe that the in-degree of the winner returned by AVD is never lower than the in-degree of the default node $t$. Indeed, when the default node is not the winner, it is beaten by the winner, who has at least as high in-degree. Hence, the upper bound of $\bigO(\sqrt{n\ln{n}})$ on the expected additive approximation of the constant mechanism on opinion poll instances carries over to AVD mechanism when the default node is selected to be a node of highest expected in-degree. In the following, we present a much stronger result that applies specifically to a priori popularity instances.
  
\begin{theorem}\label{thm:plurality-with-default}
	The AVD mechanism that uses the node of highest expected in-degree as the default node has an expected additive approximation of $\bigO(\ln^2n)$ when applied on a priori popularity instances.
\end{theorem}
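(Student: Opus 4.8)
The plan is to reduce the expected additive approximation to a single ``genuine tie'' term and then bound that term with the hazard‑function estimate advertised in the introduction. Write $w$ for the winner and $t$ for the default node, so that $p_t=\max_j p_j$ and $\mu:=np_t=\max_j\E[d_j(\x)]$. The first step is purely combinatorial and exploits the definition of the beats relation. If $w\neq t$, then $w$ beats a maximizer $m$ (a node with $d_m(\x)=\Delta(\x)$), and since the beats relation discards only the $\bigO(1)$ edges into $m$ from $\{w,t\}$, one gets $d_w(\x)\ge\Delta(\x)-1$; symmetrically, if $t$ beats such an $m$ then $d_t(\x)\ge\Delta(\x)$. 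Hence the only way to incur a non‑trivial loss is the event $T$ that no node beats every other (so $t$ wins), in which case a non‑default node attains the maximum while a second node lies within a constant $c=\bigO(1)$ of it. This yields
\[\E[\Delta(\x)-d_w(\x)]\le 1+\E\big[(\Delta(\x)-d_t(\x))\,\one\{T\}\big],\]
and it remains to bound the last expectation.

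The second step isolates a central regime. Recall that $d_t(\x)\sim\bin(n,p_t)$, that every other in‑degree is a $\bin(n,p_j)$ with $p_j\le p_t$, and that all these variables are independent (incoming edge sets are disjoint). Using the Chernoff/Okamoto bounds (Lemma~\ref{lem:chernoff}) I would show that, for a threshold $\Lambda=\Theta\big(\sqrt{\min\{\mu,n-\mu\}\,\ln n}\big)$, the events ``$d_t(\x)<\mu-\Lambda$'' and ``$\Delta(\x)>\mu+\Lambda$'' are so improbable that, even weighting by the trivial bound $\Delta(\x)-d_t(\x)\le n$, their contribution is $\bigO(1)$ (this mirrors the proof of Theorem~\ref{UB:constant}). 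After this truncation both $d_t(\x)$ and $\Delta(\x)$ lie in a window of width $\bigO(\Lambda)$ around $\mu$, so it suffices to sum, over the $\bigO(\Lambda)$ candidate maximum values $D$ in $[\mu,\mu+\Lambda]$, the weight $W(D):=\E[(D-d_t(\x))^+]=\bigO(\Lambda)$ times a bound on the probability of a tie at level~$D$.

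The heart of the argument is the per‑level tie bound, and this is where I expect the real difficulty. Union‑bounding over the ordered pair of a maximizer $a$ with $d_a(\x)=D$ and a runner‑up $b$ with $d_b(\x)\in[D-c,D]$, and using independence,
\[\Pr[T,\ \Delta(\x)=D]\le\sum_{a\ne b}\Pr[d_a(\x)=D]\,\Pr[d_b(\x)\ge D-c]\prod_{\ell\ne a,b}\Pr[d_\ell(\x)\le D].\]
The crucial move is to apply the hazard bound (Lemma~\ref{lem:technical}, which must itself be proved en route) to \emph{both} top nodes, converting each point mass into $\bigO\big(\sqrt{\ln n/\min\{\mu,n-\mu\}}\big)$ times the corresponding tail probability; the few nodes whose mean lies far below $\mu$ contribute negligibly by Chernoff and are discarded before invoking the lemma, so that the relevant hazard factor is uniformly $h=\bigO\big(\sqrt{\ln n/\min\{\mu,n-\mu\}}\big)$. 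Writing $S(D)=\sum_a\Pr[d_a(\x)\ge D]$ and using $\Pr[d_b(\x)\ge D-c]=\bigO(\Pr[d_b(\x)\ge D])$ (again via the hazard bound, $c$ constant), the pair sum collapses to $\bigO\big(h^2\,S(D)^2\prod_\ell\Pr[d_\ell(\x)\le D]\big)$; since $\prod_\ell\Pr[d_\ell(\x)\le D]\le e^{-\Theta(S(D))}$, we have $S(D)^2\prod_\ell\Pr[d_\ell(\x)\le D]\le\max_{m\ge0}m^2e^{-m}=\bigO(1)$, whence every level obeys $\Pr[T,\ \Delta(\x)=D]=\bigO(h^2)$.

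Combining the pieces, the central‑regime contribution is at most (number of levels) $\times$ (maximum weight) $\times$ $\bigO(h^2)$, i.e.
\[\bigO(\Lambda)\cdot\bigO(\Lambda)\cdot\bigO\!\left(\frac{\ln n}{\min\{\mu,n-\mu\}}\right)=\bigO\!\left(\frac{\Lambda^{2}\ln n}{\min\{\mu,n-\mu\}}\right)=\bigO(\ln^{2}n),\]
because $\Lambda^{2}=\bigO(\min\{\mu,n-\mu\}\,\ln n)$ cancels exactly the $\min\{\mu,n-\mu\}$ produced by the two hazard factors. This cancellation is precisely why the sharp $\sqrt{\min\{\mu,n-\mu\}}$ dependence in Lemma~\ref{lem:technical} is indispensable, and I regard establishing that lemma, together with organizing the level summation so that $\Lambda$ and $h$ combine to a $\mu$‑independent bound, as the main obstacle. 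I note finally that bounding the number of contributing levels crudely by the full window width $\Lambda$, rather than the $\bigO(\Lambda/\sqrt{\ln n})$ levels that actually carry the probability mass, costs one logarithmic factor; this is consistent with the $\Omega(\ln n)$ lower bound of Theorem~\ref{thm:lb-uniform} and leaves the analysis tight only up to a logarithmic factor.
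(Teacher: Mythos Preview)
Your proof is correct and follows the paper's approach closely: the same combinatorial reduction to a near-tie event (Lemmas~\ref{lem:d-star-minus-d_t-leq-1} and~\ref{lem:bound-apx-event-A}), the same Chernoff truncation to a window of width $\Theta(\sqrt{\xi_t\ln n})$ around $\mu_t$ (the paper's ``comfort zones''), and the same double application of the hazard estimate of Lemma~\ref{lem:technical} to the two top nodes. The only tactical difference is in the last step: where you bound the residual pair sum via $S(D)^2\,e^{-\Theta(S(D))}=O(1)$, the paper simply observes that $\sum_{i\neq j}\Pr[d_i>h]\Pr[d_j>h]\prod_{k\neq i,j}\Pr[d_k\le h]$ \emph{is} the probability that exactly two nodes have degree exceeding~$h$ and hence is at most~$1$; note also that the paper disposes of the edge case $\min\{\mu_t,n-\mu_t\}=O(\ln n)$ (where the hazard bound becomes vacuous) by a short direct argument at the outset, which you should add.
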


Again, for simplicity of notation, in our analysis of AVD, we consider profiles with $n+1$ nodes. We assume that the number of nodes is large, e.g., $n\geq 10^6$ (otherwise, Theorem~\ref{thm:plurality-with-default} holds trivially). Let $p_k$ be the popularity of node $k$ and recall that the in-degree $d_k(\x)$ of node $k$ is a random variable taking values between $0$ and $n$ following the binomial distribution $\bin(n,p_k)$, which is also independent on the in-degree of the other nodes. Also, let $\mu_k=\E[d_k(\x)]=p_kn$ and $\xi_k=\min\{\mu_k,n-\mu_k\}$. 

We first consider the case of $\xi_t<8200\ln{n}$. This means that the expected in-degree of the default node is either very high, i.e., $\mu_t>n-8200\ln{n}$, or very low, i.e., $\mu_t<8200\ln{n}$. When $\mu_t>n-8200\ln{n}$, the expected degree of the winner (be it the default node or not; recall the argument above that compares AVD with the constant mechanism) is more than $n-8200\ln{n}$. As $d_w\leq n$, the expected additive approximation is less than $8200\ln{n}$. In the case $\mu_t<8200\ln{n}$, a simple application of a Chernoff bound (i.e., the tail inequality (\ref{eq:chernoff-upper}) from Lemma~\ref{lem:chernoff}) yields that $\Pr[d_k(\x)\geq 9000\ln{n}] \leq n^{-26}$ and, hence, the expected additive approximation is at most $ n^{-26}\cdot n + (1-n^{-26})\cdot 9000\ln{n} \leq 1+ 9000\ln{n}$.

So, in the following, we analyze the AVD mechanism assuming that $\xi_t\geq 8200\ln{n}$. Let $h$ be the highest in-degree among all nodes. Denote by $A$ the event that there is no node that beats every other node and the default node is beaten by a non-default node of degree $h$. The following lemma addresses the simplest case where the event $A$ is not true.
\begin{lemma}\label{lem:d-star-minus-d_t-leq-1}
	$\E[(h-d_w(\x))\one\{\overline{A}\}] \leq 1$.
\end{lemma}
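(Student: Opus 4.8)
The plan is to prove the much stronger \emph{pointwise} bound $h-d_w(\x)\le 1$ on the entire event $\overline{A}$; the lemma then follows immediately, since $\E[(h-d_w(\x))\one\{\overline{A}\}]\le 1\cdot\Pr[\overline{A}]\le 1$. By De~Morgan, $\overline{A}=B_1\cup\overline{D}$, where $B_1$ is the event that some node beats every other node and $D$ is the event that $t$ is beaten by a non-default node of degree $h$. I would split $\overline{A}$ into the disjoint pieces $B_1$ and $\overline{B_1}\cap\overline{D}$ and bound $h-d_w(\x)$ by $1$ on each.

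On $B_1$, the winner $w$ is the unique node that beats every other node (uniqueness because ``beats'' is strict, so two such nodes would beat each other). Let $m$ be any node with $d_m(\x)=h$. If $m=w$ then $h-d_w(\x)=0$. Otherwise $w$ beats $m$, and the key is to convert the restricted inequality defining ``beats'' into a bound on the true in-degrees. The only difference between $d_v(\x)$ and the restricted in-degree used in a comparison is the incoming edges to $v$ from the (at most two) other nodes excluded by that comparison, each of which changes a degree by at most $1$. Concretely, when $m$ and $w$ are both non-default, $d_w(N_{m,w,t},\x)\ge d_m(N_{m,w,t},\x)+1$ gives $d_w(\x)\ge d_m(N_{m,w,t},\x)+1$, while $h=d_m(\x)\le d_m(N_{m,w,t},\x)+2$, so $h-d_w(\x)\le 1$; the cases where $m$ or $w$ equals the default node $t$ are handled identically (the excluded-edge slack is then only $1$) and in fact yield $h-d_w(\x)\le 0$.

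On $\overline{B_1}\cap\overline{D}$ no node beats every other node, so AVD returns the default node and $w=t$. If $t$ itself is the unique maximum-degree node then $h=d_t(\x)$ and we are done; otherwise pick a non-default node $m$ with $d_m(\x)=h$. Since $\overline{D}$ holds, $m$ does not beat $t$, i.e.\ $d_m(N_{m,t},\x)\le d_t(N_{m,t},\x)$. Accounting for the single excluded edge on each side as before, $h=d_m(\x)\le d_m(N_{m,t},\x)+1\le d_t(N_{m,t},\x)+1$, while $d_t(\x)\ge d_t(N_{m,t},\x)$, whence $h-d_t(\x)\le 1$.

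I expect the only real obstacle to be the bookkeeping of the ``ignored'' edges: because the beats relation between two nodes disregards the edges originating at those nodes and at $t$, each true in-degree can exceed its restricted counterpart by the incoming edges from the (at most two) excluded endpoints. The proof hinges on observing that this slack is at most $1$ per excluded endpoint and that it always pushes in the favorable direction, which is exactly what produces the clean bound $h-d_w(\x)\le 1$ and hence the estimate $\E[(h-d_w(\x))\one\{\overline{A}\}]\le\Pr[\overline{A}]\le 1$.
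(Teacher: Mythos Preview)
Your proposal is correct and follows essentially the same route as the paper: both establish the pointwise bound $h-d_w(\x)\le 1$ on $\overline{A}$ by the identical two-case split (a node beats everyone vs.\ the default node is not beaten by any degree-$h$ node), with the same edge-accounting to convert the restricted ``beats'' inequalities into full in-degree comparisons. The paper's proof is just a slightly terser version of yours.
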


\begin{proof}
	If the event $A$ does not hold, there must either be a node that beats every other node or the default node is not beaten by any node of degree $h$.
	
	So, first, assume that there is a node $w$ that beats every other node. The lemma follows if this node has degree $h$. Otherwise, let $i$ be a node of degree $h$. Since $w$ beats $i$, we have $d_w(\x)\geq d_w(N_{i,w,t},\x) \geq d_i(N_{i,w,t},\x) +1 \geq d_i(\x)-1 = h-1$ if $w\not=t$, and $d_w(\x)\geq d_t(N_{i,t},\x) \geq d_i(N_{i,t},\x) +1 \geq d_i(\x) = h$ if $w=t$.
	
	Now, assume that the default node is not beaten by any node of in-degree $h$ and there is no node that beats every other node. In this case, the winner will be the default node $t$. The lemma clearly follows if $t$ has in-degree $h$. Otherwise, since $t$ is not beaten by some node $i$ of degree $h$, we have $d_t(\x)\geq d_t(N_{i,t},\x) \geq d_i(N_{i,t},\x) \geq d_i(\x)-1=h-1$.
\end{proof}

We will now bound $\E[(h-d_w(\x))\one\{A\}]$; to do so, we will use a structural lemma.

\begin{lemma}\label{lem:bound-apx-event-A}
	Assume that $A$ is true and let $i$ be a node of highest in-degree $h$ that beats the default node $t$. Then, there is a node $j$, different than $i$ and $t$, that has degree either $h$, or $h-1$, or $h-2$.
\end{lemma}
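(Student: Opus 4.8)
The plan is to unpack the definition of the event $A$ and to locate the required node $j$ among the nodes that $i$ fails to beat. Since $A$ holds, no node beats every other node; in particular, $i$ — which attains the maximum in-degree $h$ and beats the default node $t$ by the choice made in the statement — does not beat every other node. Hence there is some node that $i$ fails to beat. The first step is to argue that this node is distinct from both $i$ and $t$: it differs from $i$ because a node is never compared with itself in the phrase ``beats every other node'', and it differs from $t$ precisely because $i$ beats $t$. Call this node $j$; it is then a non-default node with $j\neq i$ and $j\neq t$, which is exactly the candidate the lemma asks for.

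The second step is to translate the statement ``$i$ does not beat $j$'' into an in-degree inequality. Since both $i$ and $j$ are non-default, the definition of beating gives $d_i(N_{j,i,t},\x)\leq d_j(N_{j,i,t},\x)$. The rest is pure edge bookkeeping. On the left-hand side, passing from the vote set $N_i$ to $N_{j,i,t}$ deletes only the (at most two) incoming edges to $i$ that could originate at $j$ and at $t$, so $d_i(N_{j,i,t},\x)\geq d_i(\x)-2=h-2$. On the right-hand side, $N_{j,i,t}\subseteq N_j$ yields $d_j(N_{j,i,t},\x)\leq d_j(\x)$, and $d_j(\x)\leq h$ because $h$ is the maximum in-degree. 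Chaining these gives $h-2\leq d_i(N_{j,i,t},\x)\leq d_j(N_{j,i,t},\x)\leq d_j(\x)\leq h$, whence $d_j(\x)\in\{h-2,h-1,h\}$, as claimed.

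The only genuinely delicate point — and the step I would be most careful about — is guaranteeing that the node $i$ fails to beat is neither $i$ nor $t$; otherwise the argument would recover no new node. This is exactly where the hypothesis that the selected $i$ beats $t$ is used: it rules out $j=t$, while $j\neq i$ is automatic from the definition. Everything else is a routine unwinding of the definition of ``beats'' together with the observation that restricting the set of voters from $N_i$ down to $N_{j,i,t}$ can drop at most two incoming edges, which is what produces the additive slack of $2$ in the final degree window.
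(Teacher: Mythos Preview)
Your proof is correct and follows exactly the same approach as the paper's own proof: pick a node $j$ that $i$ fails to beat (necessarily $j\neq i,t$), translate ``$i$ does not beat $j$'' into $d_i(N_{i,j,t},\x)\leq d_j(N_{i,j,t},\x)$, and use the two-edge slack to obtain $d_j(\x)\geq h-2$. The paper states all of this in one sentence, but your more detailed unpacking---especially the explicit justification that $j\neq t$ uses the hypothesis that $i$ beats $t$---is accurate and matches the intended argument.
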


\begin{proof}
	Since node $i$ does not beat every other node, there must be some node $j$ that is not beaten by $i$ (clearly, $j$ is different than $t$). Then, $d_j(\x)\geq d_j(N_{i,j,t},\x)\geq d_i(N_{i,j,t},\x)\geq d_i(\x)-2=h-2$.
\end{proof}

By Lemma~\ref{lem:bound-apx-event-A}, we can bound $\E[(h-d_w(\x))\one\{A\}]$ by the expected value of the difference $h-d_t(\x)$ for all possible values of the maximum degree $h$, all possibilities for an agent $i\not=t$ having degree $h$ and an agent $j\not=i,t$ having degree either $h$ or $h-1$ or $h-2$, with the degree of the default node ranging from $0$ to $h$ and the degree of all other nodes ranging from $0$ to $h$ as well. We have
\begin{align}\nonumber
\E[(h-d_w(\x))\one\{A\}]
&\leq \sum_{h=0}^{n}{\sum_{g=0}^{h}{(h-g) \cdot \Pr[d_t(\x)= g]\sum_{i\in N_t}{\Pr[d_i(\x)=h]}}}\\\label{eq:THE-sum}
& \quad\quad\quad\quad {{\cdot \sum_{j\in N_{i,t}}{\Pr[\max\{0,h-2\} \leq d_j(\x)\leq h]}\prod_{k\in N_{i,j,t}}{\Pr[d_k(\x)\leq h]}}}
\end{align}

For every $k\in \{1,...,n+1\}$, define the {\em comfort zone} $Z_k$ of agent $k$ to be the set of integers $\{L_k, ..., U_k\}$ with the boundaries satisfying $L_k\leq \mu_k\leq U_k$ and being defined as follows. The lower boundary $L_k$ is equal to the highest integer $c$ such that  $\Pr[d_k(\x)<c]\leq n^{-5.33}$ or $0$ if no such $c$ exists. The upper boundary $U_k$ is equal to the lowest integer $c$ such that  $\Pr[d_k(\x)>c]\leq n^{-5.33}$ or $n$ if no such $c$ exists. We use the terms ``above $Z_k$'' and ``below $Z_k$'' to denote the ranges of integers (if any) $\{0, ..., L_k-1\}$ and $\{U_k+1, ..., n\}$.

Now, by simple properties of the binomial distribution and the fact that node $t$ has maximum expected in-degree, we observe that if $h$ lies above the comfort zone of agent $t$, it also lies above the comfort zone of agent $i$ and, hence, $\Pr[d_i(\x)=h] \leq n^{-5.33}$. Also, if $g$ lies below the comfort zone $Z_t$, it holds $\Pr[d_t(\x)=g] \leq n^{-5.33}$. Furthermore, if $h-2$ lies above the confort zone $Z_j$, then $\Pr[\max\{0,h-2\} \leq d_j(\x)\leq h]\leq \Pr[d_j(\x)\geq U_j]<n^{-5.33}$ as well. Since, trivially, $h-d_t(\x)\leq n$, the contribution of the at most $n^4$ terms of the sum in which either $h$ or $g$ does not belong to the comfort zone $Z_t$ or $h-2$ lies above the comfort zone $Z_j$ is at most $n^4\cdot n\cdot n^{-5.33} < 1$. Hence, equation (\ref{eq:THE-sum}) becomes
\begin{align}\nonumber
\E[(h-d_w(\x))\one\{A\}]
&\leq 1+\sum_{h=L_t}^{U_t}{\sum_{g=L_t}^{h}{(h-g) \cdot \Pr[d_t(\x)= g]\sum_{i\in N_t:h\in Z_i}{\Pr[d_i(\x)=h]}}}\\\label{eq:THE-sum-med}
& \quad\quad {{\cdot \sum_{j\in N_{i,t}:h-2\in Z_j}{\Pr[\max\{0,h-2\} \leq d_j(\x)\leq h]}\prod_{k\in N_{i,j,t}}{\Pr[d_k(\x)\leq h]}}}
\end{align}
Our aim in the following is to evalute each term in the sum at the RHS of (\ref{eq:THE-sum-med}). To do so, we will need three auxiliary technical lemmas. The proofs of the first two follow easily by applying Chernoff bounds.

\begin{lemma}\label{lem:zone-width}
	For the boundaries of the comfort zone $Z_t$ we have $U_t\leq \mu_t+4\sqrt{\xi_t\ln{n}}$ and $L_t\geq \mu_t-4\sqrt{\xi_t\ln{n}}$.  
\end{lemma}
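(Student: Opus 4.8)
The plan is to exploit the definitions of $U_t$ and $L_t$ directly: each is the extremal integer at which a one-sided tail of $d_t(\x)\sim\bin(n,p_t)$ drops to $n^{-5.33}$, so to upper bound $U_t$ (resp.\ lower bound $L_t$) it suffices to exhibit the real threshold $\mu_t + 4\sqrt{\xi_t\ln n}$ (resp.\ $\mu_t - 4\sqrt{\xi_t\ln n}$) as a witness. Concretely, for the upper boundary I would set $c_0 = \lfloor \mu_t + 4\sqrt{\xi_t\ln n}\rfloor$; since $c_0 + 1 \geq \mu_t + 4\sqrt{\xi_t\ln n}$, monotonicity of the tail gives $\Pr[d_t(\x) > c_0] = \Pr[d_t(\x)\geq c_0+1] \leq \Pr[d_t(\x) \geq \mu_t + 4\sqrt{\xi_t\ln n}]$, so once I show the right-hand side is at most $n^{-5.33}$, minimality of $U_t$ forces $U_t \leq c_0 \leq \mu_t + 4\sqrt{\xi_t\ln n}$. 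The bound on $L_t$ is entirely symmetric (take $c_1 = \lceil \mu_t - 4\sqrt{\xi_t\ln n}\rceil$ and use $\Pr[d_t(\x) < c_1] \leq \Pr[d_t(\x)\leq \mu_t - 4\sqrt{\xi_t\ln n}]$); indeed it is just the upper-boundary statement applied to $n-d_t(\x)\sim\bin(n,1-p_t)$, whose mean is $n-\mu_t$ and whose $\xi$-value is again $\xi_t$.

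So the crux is to show $\Pr[d_t(\x)\geq \mu_t+4\sqrt{\xi_t\ln n}] \leq n^{-5.33}$, which I would do with the Chernoff bounds of Lemma~\ref{lem:chernoff}, splitting on whether $\mu_t\geq n/2$. When $\mu_t \geq n/2$ we have $\xi_t = n-\mu_t$, and Okamoto's inequality (\ref{eq:okamoto}) produces exponent $\frac{(4\sqrt{\xi_t\ln n})^2 n}{2\mu_t(n-\mu_t)} = \frac{8n\ln n}{\mu_t}\geq 8\ln n$, i.e.\ a tail of at most $n^{-8}$. When $\mu_t < n/2$ we have $\xi_t = \mu_t$, and the standard bound (\ref{eq:chernoff-upper}) gives exponent $\frac{16\mu_t\ln n}{3\mu_t} = \frac{16}{3}\ln n$, i.e.\ a tail of at most $n^{-16/3}$; this is precisely where the constant $5.33$ in the definition of the comfort zone is calibrated, since $16/3 > 5.33$. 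The lower boundary uses inequalities (\ref{eq:okamoto-lower}) and (\ref{eq:chernoff-lower}) in the two mirrored cases and yields the same two exponents.

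The one thing genuinely worth checking---and the step I would be most careful about---is the side conditions of the standard Chernoff bounds: (\ref{eq:chernoff-upper}) requires $x\leq 2\mu_t$ and (\ref{eq:chernoff-lower}) requires $x\geq 2\mu_t-n$. At the threshold $x=\mu_t+4\sqrt{\xi_t\ln n}$ the first reduces to $4\sqrt{\mu_t\ln n}\leq \mu_t$, i.e.\ $\mu_t \geq 16\ln n$, and the second reduces analogously to $\xi_t \geq 16\ln n$; both are guaranteed by the standing assumption $\xi_t\geq 8200\ln n$ of this part of the analysis. Everything else is routine arithmetic together with the floor/ceiling bookkeeping above, so I do not expect a real obstacle: the lemma is essentially a careful application of the four tail inequalities, with the assumption $\xi_t \geq 8200\ln n$ and the constant $5.33$ chosen exactly so that all four cases close.
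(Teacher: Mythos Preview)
Your proposal is correct and follows essentially the same route as the paper: split on whether $\mu_t\geq n/2$, apply the Okamoto bound (\ref{eq:okamoto}) (respectively (\ref{eq:okamoto-lower})) in one case and the standard Chernoff bound (\ref{eq:chernoff-upper}) (respectively (\ref{eq:chernoff-lower})) in the other, obtaining exponents $8\ln n$ and $\tfrac{16}{3}\ln n$, with the side conditions $x\leq 2\mu_t$ and $x\geq 2\mu_t-n$ discharged via $\xi_t\geq 8200\ln n$. Your floor/ceiling bookkeeping and the symmetry observation $n-d_t(\x)\sim\bin(n,1-p_t)$ are slightly tidier than the paper's presentation, but the argument is the same.
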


\begin{proof}
	If $\mu_t\geq n/2$, by applying the tail inequality (\ref{eq:okamoto}) from Lemma~\ref{lem:chernoff}, we get
	\begin{align*}
	\Pr[d_t(\x)\geq \mu_t+4\sqrt{\xi_t \ln{n}}] &\leq \exp\left(-\frac{(4\sqrt{\xi_t\ln{n}})^2n}{2\xi_t (n-\xi_t)}\right) \leq n^{-8}.
	\end{align*}
	If $\mu_t<n/2$, observe that $4\sqrt{\xi_t\ln{n}}\leq \mu_t$, by our assumption $\xi_t\geq 8200\ln{n}$. Hence, by applying the tail inequality (\ref{eq:chernoff-upper}) from Lemma~\ref{lem:chernoff}, we get
	\begin{align*}
	\Pr[d_t(\x)\geq \mu_t+4\sqrt{\xi_t \ln{n}}] &\leq \exp\left(-\frac{(4\sqrt{\xi_t\ln{n}})^2}{3\mu_t}\right) \leq n^{-5.33}.
	\end{align*}
	The bounds on $U_t$ follows by its definition.
	
	Similarly, if $\mu_t\leq n/2$, by applying the tail inequality (\ref{eq:okamoto-lower}) from Lemma~\ref{lem:chernoff}, we get
	\begin{align*}
	\Pr[d_t(\x)\leq \mu_t-4\sqrt{\xi_t \ln{n}}] &\leq \exp\left(-\frac{(4\sqrt{\xi_t\ln{n}})^2n}{2\xi_t(n-\xi_t)}\right) \leq n^{-8}.
	\end{align*}
	If $\mu_t> n/2$, observe that $\mu_t-4\sqrt{\xi_t\ln{n}} \geq 2\mu_t-n$, by our assumption $\xi_t=n-\mu_t\geq 8200\ln{n}$. Hence, by applying the tail inequality (\ref{eq:chernoff-lower}) from Lemma~\ref{lem:chernoff}, we get
	\begin{align*}
	\Pr[d_t(\x)\leq \mu_t-4\sqrt{\xi_t \ln{n}}] &\leq \exp\left(-\frac{(4\sqrt{\xi_t\ln{n}})^2}{3\mu_t}\right) \leq n^{-5.33}.
	\end{align*}
	Again, the bound on $L_k$ follows by its definition.
\end{proof}

We will say that the comfort zones $Z_k$ and $Z_{k'}$ {\em almost intersect} if $L_{k'}-U_k\leq 2$ or $L_k-U_{k'}\leq 2$. For example, since $h\in Z_t$ and $h-2\in Z_j$, the two comfort zones $Z_t$ and $Z_j$ almost intersect.

\begin{lemma}\label{lem:mu-xi-ratio}
	If two comfort zones $Z_k$ and $Z_{k'}$ almost intersect, then $\frac{3}{4} \mu_k\leq \mu_{k'}\leq \frac{4}{3}\mu_k$ and $\frac{16}{25}\xi_k\leq \xi_{k'} \leq \frac{25}{16}\xi_k$. 
\end{lemma}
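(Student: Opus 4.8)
The plan is to turn ``the two zones lie within distance two'' into a two-sided \emph{multiplicative} comparison of $\mu$ and $\xi$, by first using Lemma~\ref{lem:zone-width} to trap each zone in a narrow window $\mu\pm 4\sqrt{\xi\ln n}$ around its mean. Throughout I assume, as happens in every application of this lemma (one of the two zones is always $Z_t$), that at least one node, say $k'$, satisfies $\xi_{k'}\geq 8200\ln n$. Some lower bound of this kind is unavoidable: the claim is simply false when both $\xi$'s are small (e.g. $\mu_k=1$, $\mu_{k'}=2$ give almost-intersecting zones but $\mu_{k'}/\mu_k=2>4/3$). So the first and crucial task is to \emph{transfer} the largeness of $\xi_{k'}$ to $\xi_k$, since only then may a width bound be applied to node $k$.

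First I would show $\xi_k$ is large. Note Lemma~\ref{lem:zone-width} applies to any node whose $\xi$ exceeds $16\ln n$ (its proof uses nothing more), and for $\xi\geq 6400\ln n$ one has the elementary estimate $4\sqrt{\xi\ln n}\leq\tfrac{1}{20}\xi$. Hence $Z_{k'}$ lies at distance at least $\tfrac{19}{20}\xi_{k'}\geq 7700\ln n$ from both endpoints $0$ and $n$ (using $\mu_{k'}\geq\xi_{k'}$ and $n-\mu_{k'}\geq\xi_{k'}$). Since the zones almost intersect, $Z_k$ contains an integer $a$ within distance $2$ of $Z_{k'}$, so $\min\{a,n-a\}\geq 7700\ln n-2$. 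Because $a\in[L_k,U_k]$, the defining properties of the boundaries give $\Pr[d_k\geq a]\geq\Pr[d_k\geq U_k]>n^{-5.33}$ and $\Pr[d_k\leq a]\geq\Pr[d_k\leq L_k]>n^{-5.33}$; a Chernoff bound (Lemma~\ref{lem:chernoff}) of exactly the form already used in the preamble then rules out $\mu_k<7000\ln n$ and $n-\mu_k<7000\ln n$, whence $\xi_k\geq 7000\ln n$.

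With both $\xi_k,\xi_{k'}\geq 7000\ln n$ secured, Lemma~\ref{lem:zone-width} places each zone inside $\mu\pm 4\sqrt{\xi\ln n}$ with $4\sqrt{\xi\ln n}\leq\tfrac{1}{20}\xi$. Taking witnesses $a\in Z_k$ and $a'\in Z_{k'}$ with $|a-a'|\leq 2$ and applying the triangle inequality yields the single additive estimate $|\mu_k-\mu_{k'}|\leq 4\sqrt{\xi_k\ln n}+4\sqrt{\xi_{k'}\ln n}+2=:R$. The key simplification is that $g(x)=\min\{x,n-x\}$ is $1$-Lipschitz, so $|\xi_k-\xi_{k'}|=|g(\mu_k)-g(\mu_{k'})|\leq|\mu_k-\mu_{k'}|\leq R$ as well. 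This lets me handle $\mu$ and $\xi$ by one computation and, importantly, sidesteps a case analysis over which side of $n/2$ the two means lie.

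Finally I would pass to ratios. Since $R\leq\tfrac{1}{20}(\xi_k+\xi_{k'})+2\leq\tfrac{1}{20}(\mu_k+\mu_{k'})+2$ (using $\xi\leq\mu$), assuming w.l.o.g.\ $\mu_{k'}\geq\mu_k$ and rearranging gives $\tfrac{19}{20}\mu_{k'}\leq\tfrac{21}{20}\mu_k+2$, i.e.\ $\mu_{k'}\leq\tfrac{21}{19}\mu_k+\tfrac{40}{19}$; as $\mu_k\geq 7000\ln n$ the additive term is negligible, so $\mu_{k'}/\mu_k<1.2<\tfrac{4}{3}$, and the identical computation on $\xi$ gives $\xi_{k'}/\xi_k<1.2<\tfrac{25}{16}$ (the reverse ratios being at most $1$). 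The stated bounds $\tfrac{3}{4},\tfrac{4}{3},\tfrac{16}{25},\tfrac{25}{16}$ thus hold with room to spare. The main obstacle is exactly the bootstrap of the second paragraph: because the lemma is stated for arbitrary $k,k'$ and fails for small $\xi$, one must combine the almost-intersection with the known largeness of $\xi_{k'}$ and the comfort-zone tail conditions to force $\xi_k=\Omega(\ln n)$ before any width bound is even available for $k$, and the several Chernoff sub-regimes ($\mu_k$ below, near, or above $n/2$, and $a$ above or below $\mu_k$) each need to be checked, though all are routine.
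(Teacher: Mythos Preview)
Your core argument is the paper's: apply Lemma~\ref{lem:zone-width} to both nodes and combine with ``almost intersection'' to obtain $|\mu_k-\mu_{k'}|\le 4\sqrt{\xi_k\ln n}+4\sqrt{\xi_{k'}\ln n}+2$, then convert to a ratio using that the $\mu$'s and $\xi$'s are $\Omega(\ln n)$; your $1$-Lipschitz observation $|\xi_k-\xi_{k'}|\le|\mu_k-\mu_{k'}|$ is exactly the paper's line $\max\{\xi_k,\xi_{k'}\}-\min\{\xi_k,\xi_{k'}\}\le\mu_{k'}-\mu_k$.

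The one substantive difference is your bootstrap paragraph. The paper simply invokes the zone-width bound for both $k$ and $k'$ and asserts $\mu_{k'}\ge 8200\ln n$ without justification, whereas you rightly note that Lemma~\ref{lem:zone-width} is stated only for $t$ (its proof needing $\xi\ge 16\ln n$) and that the statement of Lemma~\ref{lem:mu-xi-ratio} fails outright when both $\xi$'s are small. Your fix---use the almost-intersection to place a point of $Z_k$ far from $0$ and $n$, then use the defining tail conditions $\Pr[d_k\ge a],\Pr[d_k\le a]>n^{-5.33}$ together with a Chernoff bound to force $\xi_k=\Omega(\ln n)$---is a genuine tightening of the paper's argument, which leaves this point implicit. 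Once the bootstrap is done, the remainder of your proof and the paper's are interchangeable.
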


\begin{proof}
	Without loss of generality, assume that $\mu_k\leq \mu_{k'}$; the other case is completely symmetric. Then, $L_{k'}-U_{k}\leq 2$, which, using the facts $\xi_k,\xi_{k'}\leq \mu_{k'}$ and $\mu_{k'}\geq 8200\ln{n}$ as well as Lemma~\ref{lem:zone-width}, implies that
	\begin{align*}
	\mu_{k'} &\leq 2+\mu_{k}+4\sqrt{\xi_k\ln{n}}+4\sqrt{\xi_{k'}\ln{n}} \leq 2+\mu_{k}+8\sqrt{\mu_{k'}\ln{n}} \leq \mu_{k}\left(\frac{1}{4100}+1\right)+\frac{8\mu_{k'}}{\sqrt{8200}},
	\end{align*} 
	which clearly implies that $\mu_{k'}\leq \frac{4}{3}\mu_k$. 
	
	Also, observe that 
	\begin{align*}
	\max\{\xi_{k'},\xi_k\}-\min\{\xi_{k'},\xi_{k}\} &\leq \mu_{k'}-\mu_k \leq 2+4\sqrt{\xi_k\ln{n}}+4\sqrt{\xi_{k'}\ln{n}}\\
	&\leq 2+8\sqrt{\max\{\xi_k,\xi_{k'}\}\ln{n}}\\
	&\leq \left(\frac{1}{4100}+\frac{8}{\sqrt{8200}}\right) \max\{\xi_{k},\xi_{k'}\},
	\end{align*}
	which implies that $\max\{\xi_{k'},\xi_{k}\} \leq \frac{25}{16} \min\{\xi_{k'},\xi_{k}\}$ as desired.
\end{proof}

We now prove the most important technical lemma in our analysis. 

\begin{lemma}\label{lem:technical}
	Let $\ell\in\{0,1,2\}$ and $h$ be such that $h\in Z_t$ and $h-2\in Z_k$ for some agent $k$. Then,
	$$\Pr[d_k(\x)=h-\ell]\leq 264 e\sqrt{\frac{\ln{n}}{\xi_t}}\cdot \Pr[d_k(\x)>h].$$ 
\end{lemma}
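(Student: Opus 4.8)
The plan is to prove a hazard-function estimate for the binomial variable $d_k(\x)\sim\bin(n,p_k)$ by comparing the single point mass $\Pr[d_k(\x)=h-\ell]$ against the tail mass $\Pr[d_k(\x)>h]$, which I would lower-bound by spreading it over a short window of values just above $h$. First I would record two structural facts. Since $h\in Z_t$ and $h-2\in Z_k$, the comfort zones $Z_t$ and $Z_k$ almost intersect, so Lemma~\ref{lem:mu-xi-ratio} gives $\tfrac{16}{25}\xi_t\le\xi_k\le\tfrac{25}{16}\xi_t$; this lets me carry out the whole argument with $\xi_k$ in place of $\xi_t$ and pay only a constant factor at the very end. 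Second, because $\xi_k\ge\tfrac{16}{25}\xi_t\ge\tfrac{16}{25}\cdot 8200\ln n\gg\ln n$, the comfort-zone width estimate of Lemma~\ref{lem:zone-width} applies verbatim to node $k$, giving $U_k\le\mu_k+4\sqrt{\xi_k\ln n}$ and hence $h\le U_k+2\le\mu_k+4\sqrt{\xi_k\ln n}+2$. In particular $h$ sits within $O(\sqrt{\xi_k\ln n})$ of the mean $\mu_k$.

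Next I would introduce the window $\{h+1,\dots,h+w\}$ of width $w=\lceil c\sqrt{\xi_k/\ln n}\rceil$ for a small absolute constant $c$ (this window lies in $\{0,\dots,n\}$ since $\xi_k\gg\ln n$), and write $\Pr[d_k(\x)>h]\ge\sum_{x=h+1}^{h+w}\Pr[d_k(\x)=x]$. The heart of the argument is to show that the point masses do not decay by more than a constant factor across this window, i.e.\ $\Pr[d_k(\x)=x]\ge e^{-1}\Pr[d_k(\x)=h]$ for every $x$ in it. Granting this, the tail is at least $w\,e^{-1}\Pr[d_k(\x)=h]\ge\tfrac{c}{e}\sqrt{\xi_k/\ln n}\,\Pr[d_k(\x)=h]$, which already has the desired shape.

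To control the decay I would use the exact ratio of consecutive binomial masses, $\Pr[d_k(\x)=x+1]/\Pr[d_k(\x)=x]=\tfrac{(n-x)p_k}{(x+1)(1-p_k)}$, and reparametrise by $s=x-\mu_k$. Writing $\sigma_k^2=np_k(1-p_k)$ for the variance and using the identity $(n-\mu_k)p_k=\mu_k(1-p_k)=\sigma_k^2$, each log-ratio simplifies to $\ln\bigl(1-\tfrac{sp_k}{\sigma_k^2}\bigr)-\ln\bigl(1+\tfrac{(s+1)(1-p_k)}{\sigma_k^2}\bigr)$, which is $\approx-s/\sigma_k^2$; summing over the window, the total log-decay is on the order of $-w(h-\mu_k)/\sigma_k^2$. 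Since $h-\mu_k=O(\sqrt{\xi_k\ln n})$, $w=O(\sqrt{\xi_k/\ln n})$, and $\sigma_k^2\ge\xi_k/2$, this product is $O(c)$; choosing $c$ small, and using that $n$ is large to absorb lower-order terms via $\ln(1-a)\ge-a-a^2$ and $\ln(1+b)\le b$ (both valid because the arguments are $o(1)$ here), keeps the decay above $-1$. This balancing of the window width---long enough to manufacture the factor $\sqrt{\xi_k/\ln n}$, short enough that the masses stay within a constant factor---is the main obstacle, and it is exactly the step where the precise $\sqrt{\min\{\mu,n-\mu\}}$ dependence is won.

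Finally I would absorb the shift by $\ell\le 2$. The same consecutive-ratio formula shows that each ratio stays above $1/2$ for arguments at most $h$, so $\Pr[d_k(\x)=h-\ell]\le 2^{\ell}\Pr[d_k(\x)=h]\le 4\,\Pr[d_k(\x)=h]$. Combining the three estimates and replacing $\xi_k$ by $\xi_t$ through $\xi_k\ge\tfrac{16}{25}\xi_t$ (so $\sqrt{\ln n/\xi_k}\le\tfrac54\sqrt{\ln n/\xi_t}$) yields $\Pr[d_k(\x)=h-\ell]\le C\,e\,\sqrt{\ln n/\xi_t}\,\Pr[d_k(\x)>h]$ for an explicit constant $C$; carefully tracking the choice of $c$, the factor $4$, and the zone-ratio constants through the computation gives the stated value $C=264$.
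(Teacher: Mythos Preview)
Your proposal is correct and follows essentially the same route as the paper: both lower-bound $\Pr[d_k(\x)>h]$ by a window of $\Theta\bigl(\sqrt{\xi/\ln n}\bigr)$ point masses just above $h$ and show each stays within a factor $e$ of the target mass via explicit binomial ratio estimates. The only differences are bookkeeping---the paper works with $\xi_t$ directly (solving the quadratic (\ref{eq:y-vs-x}) to obtain the window width and absorbing the $\ell$-shift in the same step), whereas you work with $\xi_k$ and convert at the end, choose the window width $w$ in advance, and handle the shift by $\ell$ separately; these are cosmetic and do not affect the argument.
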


\begin{proof}
	By the definition of the binomial distribution, we have 
	\begin{align*}
	\Pr[d_k(\x)=z] &={n\choose z} p_k^{z}(1-p_k)^{n-z}
	\end{align*} 
	for every integer $z$ with $0\leq z\leq n$. Let $x$ be any positive integer with $x\leq \mu_k+4\sqrt{\xi_k\ln{n}}$. For every integer $y>x$, we have
	\begin{align}\nonumber
	\frac{\Pr[d_k(\x)=x]}{\Pr[d_k(\x)=y]} &=\frac{{n \choose x} p_k^{x}(1-p_k)^{n-x}}{{n \choose y} p_k^{y}(1-p_k)^{n-y}}\\\nonumber
	&= \frac{(x+1)\cdot (x+2)\cdot  ... \cdot y}{(n-y+1)\cdot (n-y+2)\cdot... \cdot (n-x)}\cdot \frac{(1-p_k)^{y-x}}{p_k^{y-x}}\\\nonumber
	&= \frac{\left(1+\frac{x-\mu_k+1}{\mu_k}\right)\cdot \left(1+\frac{x-\mu_k+2}{\mu_k}\right)\cdot... \cdot \left(1+\frac{y-\mu_k}{\mu_k}\right)}{\left(1-\frac{y-\mu_k-1}{n-\mu_k}\right)\cdot \left(1-\frac{y-\mu_k}{n-\mu_k}\right)\cdot ... \cdot \left(1-\frac{x-\mu_k}{n-\mu_k}\right)}\\\nonumber
	&\leq \frac{\left(1+\frac{y-\mu_k}{\mu_k}\right)^{y-x}}{\left(1-\frac{y-\mu_k-1}{n-\mu_k}\right)^{y-x}}\\\label{eq:exp-bound}
	& \leq \exp\left(\frac{(y-\mu_k)(y-x)}{\mu_k}+\frac{(y-\mu_k+1)(y-x)}{n-y+1}\right)
	\end{align}
	The first inequality follows since $x<y$. In the second inequality, we have used the properties $1+z\leq e^z$ for $z\in \R$ and, consequently, $\frac{1}{1-z}=1+\frac{z}{1-z}\leq \exp(\frac{z}{1-z})$ for $z\not=1$.
	
	We now use inequality (\ref{eq:exp-bound}) to argue that by selecting $y$ such that $y>h-\ell$ and 
	\begin{align}\label{eq:y-vs-x}
	(y-\mu_k+1)(y-h+\ell) &\leq \frac{3}{11}(\xi_t-y+\mu_k), 
	\end{align}
	we get 
	\begin{align}\label{eq:bound-e}
	\frac{\Pr[d_k(\x)=h-\ell]}{\Pr[d_k(\x)=y]}&\leq e.
	\end{align}
	Recall that $Z_k$ and $Z_t$ almost intersect. Hence, we have $\mu_k\geq 3\mu_t/4$ (by Lemma~\ref{lem:mu-xi-ratio}), and (\ref{eq:y-vs-x}) yields
	\begin{align}\label{eq:8-11}
	\frac{(y-\mu_k)(y-h+\ell)}{\mu_k} &\leq \frac{3}{11}\cdot \frac{\xi_t-y+\mu_k}{\mu_k} \leq \frac{4}{11}\frac{2\mu_t-y}{\mu_t}\leq \frac{8}{11}.
	\end{align}
	Furthermore, using again (\ref{eq:y-vs-x}), and the inequalities $\mu_k\leq \mu_t$ and $\xi_t\leq n-\mu_t$, we get
	\begin{align}\label{eq:3-11}
	\frac{(y-\mu_k+1)(y-d+\ell)}{n-y+1} &\leq \frac{3}{11}\cdot \frac{\xi_t-y+\mu_k}{n-y+1}\leq  \frac{3}{11}\cdot  \frac{\xi_t-y+\mu_t}{n-y+1}\leq \frac{3}{11}\cdot  \frac{n-y}{n-y+1} \leq \frac{3}{11}.
	\end{align}
	Inequality (\ref{eq:bound-e}) now follows by inequalities (\ref{eq:exp-bound}), (\ref{eq:8-11}), and (\ref{eq:3-11}).
	
	Solving inequality (\ref{eq:y-vs-x}), we get that the range of values for $y$ so that (\ref{eq:bound-e}) is true satisfies 
	\begin{align*}
	h-\ell< y \leq \frac{h-\ell+\mu_k-\frac{14}{11}+\sqrt{(h-\ell-\mu_k)^2+\frac{16}{11}(h-\ell-\mu)+\frac{12}{11}\xi_t+\frac{196}{121}}}{2}.
	\end{align*}
	Hence, the number of integer values for $y$ so that $y>h-\ell$ and (\ref{eq:y-vs-x}) holds is at least 
	\begin{align}\nonumber
	& \frac{h-\ell+\mu_k-\frac{14}{11}+\sqrt{(h-\ell-\mu_k)^2+(d-\ell-\mu_k)+\frac{12}{11}\xi_t+\frac{196}{121}}}{2}-h+\ell-2\\\label{eq:x-function}
	&=\frac{\sqrt{(h-\ell-\mu_k)^2+(d-\ell-\mu_k)+\frac{12}{11}\xi_t+\frac{196}{121}}-(h-\ell-\mu_k+96/11)}{2}+2.
	\end{align}
	The derivative of the quantity at the RHS of (\ref{eq:x-function}) with respect to $h$ is
	\begin{align*}
	\frac{2(h-\ell-\mu_k)+1}{4\sqrt{(h-\ell-\mu_k)^2+(h-\ell-\mu_k)+\frac{12}{11}\xi_t+\frac{196}{121}}}-\frac{1}{2}<0,
	\end{align*}
	i.e., it is decreasing. Since $h-\ell\in Z_k$ and $Z_k$ and $Z_t$ almost intersect, using Lemmas~\ref{lem:zone-width} and~\ref{lem:mu-xi-ratio} we have $h-\ell-\mu_k\leq 4\sqrt{\xi_k\ln{n}}\leq 5\sqrt{\xi_t\ln{n}}$. Hence, we can bound the RHS of (\ref{eq:x-function}) as follows:
	\begin{align*}
	& \frac{\sqrt{(h-\ell-\mu_k)^2+(d-\ell-\mu_k)+\frac{12}{11}\xi_t+\frac{196}{121}}-\left(h-\ell-\mu_k+\frac{96}{11}\right)}{2}+2\\
	&=\frac{(h-\ell-\mu_k)^2+(d-\ell-\mu_k)+\frac{12}{11}\xi_t+\frac{196}{121}-\left(h-\ell-\mu_k+\frac{96}{11}\right)^2}{2(\sqrt{(h-\ell-\mu_k)^2+(d-\ell-\mu_k)+\frac{12}{11}\xi_t+\frac{196}{121}}+(h-\ell-\mu_k+96/11))}+2\\
	&=\frac{12\xi_t-181(h-\ell-\mu_k)-820}{22\left(\sqrt{(h-\ell-\mu_k)^2+(d-\ell-\mu_k)+\frac{12}{11}\xi_t+\frac{196}{121}}+h-\ell-\mu_k+96/11\right)}+2\\
	&\geq \frac{12\xi_t-905\sqrt{\xi_t\ln{n}}-820}{22\left(\sqrt{25\xi_t\ln{n}+5\sqrt{\xi_t\ln{n}}+\frac{12}{11}\xi_t+\frac{196}{121}}+5\sqrt{\xi_t\ln{n}}+\frac{96}{11}\right)} \geq \frac{1}{264}\sqrt{\frac{\xi_t}{\ln{n}}}+2.
	\end{align*}
	In the second inequality, we have used $905\sqrt{\xi_t\ln{n}}\leq 10\xi_t$ and $820\leq \xi_t$ to bound the numerator by $\xi_t$ (recall that $\xi_t\geq 8200\ln{n}$), while the parenthesis in the denominator is clearly at most $12\sqrt{\xi_t\ln{n}}$.
	
	Now, let $\ell\in \{0,1,2\}$ and $r=\left\lceil\frac{1}{264}\sqrt{\frac{\xi_t}{\ln{n}}}\right\rceil+2$. By the discussion above, for $x=h-\ell$ we have 
	\begin{align*}
	\Pr[d_k(\x)=h-\ell] &\leq e \Pr[d_k(\x)=y]
	\end{align*}
	for $y=h-\ell+1, h-\ell+2, ..., h-\ell+r+2$. By summing these inequalities for $y=h+1, ...,h+r$, we get
	\begin{align*}
	r\Pr[d_k(\x)=h-\ell] &\leq e \sum_{y=h+1}^{r}{\Pr[d_k(\x)=y]} \leq e\Pr[d_k(\x)>h]
	\end{align*}
	and, equivalently, 
	\begin{align*}
	\Pr[d_k(\x)=h-\ell] &\leq \frac{e}{r}\Pr[d_k(\x)>h]\leq 264e\sqrt{\frac{\ln{n}}{\xi_t}}\Pr[d_k(\x)>h].
	\end{align*}
	The lemma follows.
\end{proof}

We are ready to complete the proof of Theorem~\ref{thm:plurality-with-default}. For $h\in Z_t$, using Lemma~\ref{lem:zone-width}, we have
\begin{align}\label{eq:d-star-minus-dt}
\sum_{h\in Z_t}{\sum_{g\in Z_t}{(h-g)\Pr[d_t(\x)=g]}} &\leq  \sum_{h\in Z_t}{8\sqrt{\xi_t\ln{n}} \sum_{g\in Z_t}{\Pr[d_t(\x)=g]}} \leq 64\xi_t\ln{n}.
\end{align}
Furthermore, Lemma~\ref{lem:technical} yields
\begin{align}\nonumber
& \sum_{i\in N_t}{\sum_{j\in N_{i,t}}{\Pr[d_i(\x)=h]\Pr[\max\{0,h-2\}\leq d_j(\x)\leq h]\prod_{k\in N_{i,j,t}}{\Pr[d_k(\x)\leq h]}}}\\\nonumber
& \leq 3\left(264\cdot e \sqrt{\frac{\ln{n}}{\xi_t}}\right)^2 \sum_{i\in N_t}{\sum_{j\in N_{i,t}}{\Pr[d_i(\x)>h]\Pr[d_j(\x)> h]\prod_{k\in N_{i,j,t}}{\Pr[d_k(\x)\leq h]}}}\\\label{eq:prob-two-higher}
&\leq 209088\cdot e^2\cdot \frac{\ln{n}}{\xi_t}.
\end{align}
since the last double sum is simply the probability that exactly two agents have degree higher than $h$ (and, hence, has value at most $1$).
Using (\ref{eq:d-star-minus-dt}) and (\ref{eq:prob-two-higher}), equation (\ref{eq:THE-sum-med}) yields $\E[(h-d_w(\x))\one\{A\}]\leq 1+10^8\cdot \ln^2n$ and the proof of Theorem~\ref{thm:plurality-with-default} is now complete.\qed

\section{A lower bound for AVD}\label{sec:lb-AVD}

In this section, we prove the following lower bound for the uniform domain.

\begin{theorem}\label{thm:lb-uniform}
	When applied on uniform instances with $p=1/2$, the AVD mechanism has expected additive approximation $\Omega(\ln n)$.
\end{theorem}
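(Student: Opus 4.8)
The plan is to lower bound the expected additive approximation by the single contribution of the event that the default node $t$ is the winner while trailing the maximum by a lot. By the symmetry of the uniform model the bound does not depend on which node is designated the default, and $d_t\sim\bin(n,1/2)$. The first step is a reduction: by the definition of beating, whenever the winner $w$ differs from $t$ it beats a node of maximum in-degree, and since ignoring the incoming edges of a bounded number of nodes lowers a degree by at most two, this forces $d_w\geq\Delta(\x)-1$ (this is the mechanism behind Lemma~\ref{lem:d-star-minus-d_t-leq-1}). Hence the deficit can exceed a constant only when no node beats every other node, in which case $w=t$, so that
\[\E[\Delta(\x)-d_w(\x)]\;\geq\;\E\big[(\Delta(\x)-d_t)\,\one\{\text{no node beats every other node}\}\big]-O(1).\]

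Second, I would exhibit a sufficient event $G$ on which $t$ wins with a large deficit. Let $G$ be the event that two non-default nodes $i,j$ share a common in-degree $h$, that every other node (including $t$) has in-degree at most $h-3$, and that the at most four edges among $\{i,j,t\}$ are configured so that $d_i(N_{j,i,t},\x)=d_j(N_{j,i,t},\x)$; this last requirement is an event of constant probability that is essentially independent of the degrees. On $G$ neither $i$ nor $j$ beats the other, and since every remaining node is at least three below, nobody beats either of them; thus no node beats everyone, $w=t$, $\Delta(\x)=h$, and the deficit equals $h-d_t$. Restricting $h$ to a band just above the typical maximum $m^\star\approx \tfrac n2+\sqrt{\tfrac{n\ln n}{2}}$ and using that $d_t$ concentrates around $n/2$ (by the Chernoff bound~(\ref{eq:okamoto})), I would conclude $h-d_t=\Omega(\sqrt{n\ln n})$ on $G$. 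It then suffices to prove $\Pr[G]=\Omega\!\big(\sqrt{(\ln n)/n}\big)$, since multiplying the two estimates gives $\Omega(\sqrt{n\ln n})\cdot\Omega(\sqrt{(\ln n)/n})=\Omega(\ln n)$.

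The main obstacle is precisely the lower bound $\Pr[G]=\Omega(\sqrt{(\ln n)/n})$. I would compute it by a first- and second-moment argument over the $\binom n2$ candidate pairs and over $h$ in the band; the first moment has the shape $\binom n2\sum_h \Pr[d=h]^2\,\Pr[d\leq h-3]^{\,n-2}$, and triple near-ties at the top contribute only lower-order terms, so $\Pr[G]$ is of the same order as this first moment. The delicate feature is that the summand contains the \emph{square} $\Pr[d=h]^2$ of probabilities that are only polynomially small for $h$ near the maximum, a regime in which the crude (reverse) Chernoff bounds lose a $\sqrt{n/\ln n}$ factor and are hence too lossy to be used termwise. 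This is exactly where the sharp, two-sided hazard estimate $\Pr[d=h]/\Pr[d\geq h]=\Theta\!\big(\sqrt{(\ln n)/n}\big)$ for $p=1/2$ — the tight case of Lemma~\ref{lem:technical} — becomes indispensable: it lets me replace each $\Pr[d=h]$ by $\Theta(\sqrt{(\ln n)/n})\,\Pr[d\geq h]$ and reduce the first moment to $\Theta\!\big(\tfrac{\ln n}{n}\big)\binom n2\sum_h \Pr[d\geq h]^2\,\Pr[d\leq h-3]^{\,n-2}$.

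Finally I would evaluate the remaining sum by summation by parts in the tail probability $q(h)=\Pr[d\geq h]$, using the same hazard relation $\Pr[d=h]=q(h)-q(h+1)=\Theta(\sqrt{(\ln n)/n})\,q(h)$ to change variables; since $q(h)=\Theta(1/n)$ around the maximum (kept from being negligible below by the inverse Chernoff bound of Corollary~\ref{cor:inverse}, and bounded above by~(\ref{eq:okamoto}) so that the factor $\Pr[d\leq h-3]^{\,n-2}$ stays $\Omega(1)$), the sum is comparable to $\tfrac{1}{\sqrt{(\ln n)/n}}\sum_h q(h)\,\Pr[d=h]=\Theta(\sqrt{n/\ln n})\cdot\Theta(1/n^2)$. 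Collecting the $\binom n2$ and hazard factors then yields a first moment of order $\sqrt{(\ln n)/n}$, as needed. Establishing the lower-bound direction of the hazard estimate through the inverse Chernoff bound, and controlling the second moment so that $\Pr[G]\asymp\E[\#\text{witnessing pairs}]$, are the technically demanding ingredients; the reduction and the $\Omega(\sqrt{n\ln n})$ bound on the deficit are comparatively routine.
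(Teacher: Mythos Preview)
Your proposal follows essentially the paper's strategy: exhibit the event that two non-default nodes tie at a value $h$ in a band near the typical maximum while all other nodes (including $t$) lag, forcing $w=t$ with deficit $\Omega(\sqrt{n\ln n})$, and show that this event has probability $\Omega(\sqrt{(\ln n)/n})$ via the hazard-rate lower bound $\Pr[B=h]\gtrsim\sqrt{(\ln n)/n}\,\Pr[B\geq h]$. Two simplifications in the paper's execution are worth adopting. First, the second-moment step is unnecessary: the events indexed by the tied pair $\{i,j\}$ and the tie value $h$ are pairwise \emph{disjoint} (once two specified nodes are at $h$ and all other non-default nodes are strictly below, no other pair can witness), so the first moment is already a lower bound on the probability of the union. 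Second, the paper avoids your summation-by-parts step by fixing explicit thresholds $L$ and $U$ through tail cutoffs $\Pr[B>L]\approx 1/n$ and $\Pr[B>U]\approx 1/(e^2 n)$, proving $U-L=\Omega(\sqrt{n/\ln n})$, and then lower-bounding each of the $U-L$ summands uniformly: by the very definitions of $L$ and $U$ one has $\Pr[B\geq h]=\Omega(1/n)$ and $\Pr[B\leq h-1]^{n-2}\geq (1-1/(n\sqrt 2))^{n-2}=\Omega(1)$ for every $h\in(L,U]$. The hazard-rate lower bound itself (the paper's Lemma~\ref{lem:lb-step2}) is a three-line consequence of the monotonicity of the binomial pmf past the mean, and the inverse Chernoff bound is invoked only once, to establish $L\geq n/2+\Omega(\sqrt{n\ln n})$.
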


With uniform instances, the in-degree of each node follows the binomial distribution. In the proof of Theorem~\ref{thm:lb-uniform}, we use the random variables $B$ and $B'$ following the distributions $\bin(n,1/2)$ and $\bin(n-1,1/2)$, respectively. We also assume that $n$ is sufficiently large.

Let $U$ be the lowest integer $c$ such that $\Pr[B>c]\leq \frac{1}{3e^2n\sqrt{6}}$. Similarly, let $L$ be the lowest integer $c$ such that $\Pr[B>c]< \frac{1}{n\sqrt{2}}$. Consider the following event $D$:
\begin{itemize}
	\item The default node has degree at most $n/2$,
	\item two non-default nodes (called the {\em potential winners}) have the same in-degree $d \in [L+1,U]$, without counting the edges between them, 
	\item the remaining non-default nodes (called the {\em losers}) have in-degree at most $d-1$.
\end{itemize}
Then, AVD returns the default node as a winner and the additive approximation is at least $L-n/2$. We will show that $\E[(L-n/2)\one\{D\}]$ is $\Omega(\ln{n})$, proving the lemma. In particular, we will use the inequality
\begin{align}\label{eq:D}
\E[(L-n/2)\one\{D\}] & \geq \frac{1}{2}(L-n/2) \sum_{d=L+1}^U{{n\choose 2}\Pr[B'=d]^2 \cdot \Pr[B\leq d-1]^{n-2}}.
\end{align}
The RHS of equation (\ref{eq:D}) is the product of the lower bound of the additive approximation $L-n/2$ when event $D$ happens, with $1/2$ which is (a lower bound on) the probability that the default node has degree at most $n/2$, and with the probability $\Pr[B'=d]^2$ that the two potential winners have degree $d$ (ignoring the edges between them) and the probability $\Pr[B\leq d-1]^{n-2}$ that the losers have degree at most $d-1$, for all the ${n\choose 2}$ selections of the two potential winners. 

We will make use of a series of lemmas to bound the several quantities that appear in the RHS of equation (\ref{eq:D}).

\begin{lemma}\label{lem:lb-step1}
	$L \geq \frac{n}{2}+\sqrt{\frac{n\ln{n}}{6}}$.	
\end{lemma}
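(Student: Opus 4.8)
The plan is to read off $L$ directly from the inverse Chernoff bound of Corollary~\ref{cor:inverse}, since the threshold $\frac{1}{n\sqrt2}$ in the definition of $L$ is tailored precisely to that estimate. Recall that $L$ is the smallest integer $c$ with $\Pr[B>c]<\frac{1}{n\sqrt2}$, where $B\sim\bin(n,1/2)$, and that $\Pr[B>c]$ is non-increasing in $c$. Consequently, to establish $L\geq m$ for a real target $m$, it suffices to exhibit that the largest integer strictly below $m$ \emph{fails} the defining strict inequality, i.e.\ that $\Pr\bigl[B>\lceil m\rceil-1\bigr]\geq\frac{1}{n\sqrt2}$; by monotonicity this then holds for all smaller integers as well, forcing $L\geq\lceil m\rceil\geq m$.

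The concrete choice is $m=\frac n2+\sqrt{\frac{n\ln n}{6}}$, reached by applying Corollary~\ref{cor:inverse} with $\delta=\sqrt{\frac{\ln n}{6n}}$. First I would check the hypothesis $\delta\leq 1/10$: this is just $100\ln n\leq 6n$, which holds for $n$ sufficiently large (as assumed in this section). The arithmetic is then exact: $n\bigl(\tfrac12+\delta\bigr)=\frac n2+n\delta=\frac n2+\sqrt{\frac{n\ln n}{6}}=m$, while $3\delta^2 n=3\cdot\frac{\ln n}{6n}\cdot n=\frac{\ln n}{2}$, so $\exp(-3\delta^2 n)=n^{-1/2}$. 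Corollary~\ref{cor:inverse} therefore yields $\Pr[B\geq m]\geq\frac{1}{\sqrt{2n}}\cdot n^{-1/2}=\frac{1}{n\sqrt2}$, matching the threshold in the definition of $L$ exactly.

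The only point requiring care is the passage from the real number $m$ to the integer-indexed quantity $\Pr[B>\cdot]$, and in particular getting the inequality direction right. Here I would use that $B$ is integer-valued, so that $\Pr[B>\lceil m\rceil-1]=\Pr[B\geq\lceil m\rceil]=\Pr[B\geq m]\geq\frac{1}{n\sqrt2}$. Thus $\lceil m\rceil-1$ satisfies $\Pr[B>\cdot]\geq\frac{1}{n\sqrt2}$ rather than the strict opposite, and by monotonicity every integer below $m$ does too; hence the smallest integer realizing the strict inequality satisfies $L\geq\lceil m\rceil\geq\frac n2+\sqrt{\frac{n\ln n}{6}}$, as claimed. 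I expect no genuine obstacle: the bound is essentially a one-line consequence of Corollary~\ref{cor:inverse} once the integrality bookkeeping is handled correctly, and the $\frac16$ constant inside the square root is exactly what makes the exponent collapse to $n^{-1/2}$.
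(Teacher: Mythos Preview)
Your proposal is correct and follows essentially the same approach as the paper: both apply Corollary~\ref{cor:inverse} with $\delta=\sqrt{\frac{\ln n}{6n}}$ to obtain $\Pr\bigl[B\geq \frac{n}{2}+\sqrt{\frac{n\ln n}{6}}\bigr]\geq \frac{1}{n\sqrt{2}}$ and then invoke the definition of $L$. Your version is in fact more explicit about the integrality bookkeeping, which the paper elides with the phrase ``the lemma follows by the definition of $L$.''
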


\begin{proof}
By applying Corollary~\ref{cor:inverse} to the random variable $B\sim\bin(n,1/2)$ with $\delta=\sqrt{\frac{\ln{n}}{6n}}$ (observe that $\delta\leq 1/10$ since $n$ is large), we have 
$\Pr\left[B\geq \frac{n}{2}+\sqrt{\frac{n\ln{n}}{6}}\right] \geq \frac{1}{n\sqrt{2}}$
for the random variable $B\sim \bin(n,1/2)$. The lemma follows by the definition of $L$.
\end{proof}

\begin{lemma}\label{lem:chernoff-U}
	$U \leq \frac{n}{2}+\sqrt{n\ln{n}}$.
\end{lemma}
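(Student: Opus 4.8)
The plan is to argue directly from the definition of $U$ as the lowest integer $c$ with $\Pr[B > c] \leq \frac{1}{3e^2 n \sqrt{6}}$. Since $U$ is by definition the \emph{least} such integer, it suffices to exhibit a single integer $c^*$ that is no larger than $\frac{n}{2} + \sqrt{n\ln n}$ and already satisfies the defining tail bound; this immediately forces $U \leq c^* \leq \frac{n}{2} + \sqrt{n\ln n}$. I would take $c^* = \lfloor \frac{n}{2} + \sqrt{n\ln n}\rfloor$, so that $c^* + 1 \geq \frac{n}{2} + \sqrt{n\ln n}$ and hence $\Pr[B > c^*] = \Pr[B \geq c^*+1] \leq \Pr[B \geq \frac{n}{2} + \sqrt{n\ln n}]$.

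In this way everything reduces to bounding the single upper tail $\Pr[B \geq \frac{n}{2} + \sqrt{n\ln n}]$. Here I would invoke the Okamoto inequality (\ref{eq:okamoto}) from Lemma~\ref{lem:chernoff}, which applies because $B \sim \bin(n,1/2)$ has $\mu = n/2 \geq n/2$. Substituting $\mu = n-\mu = n/2$, the exponent $\frac{(x-\mu)^2 n}{2\mu(n-\mu)}$ collapses to $\frac{2(x-\mu)^2}{n}$, and with $x - \mu = \sqrt{n\ln n}$ this yields exactly $\Pr[B \geq \frac{n}{2} + \sqrt{n\ln n}] \leq \exp(-2\ln n) = n^{-2}$.

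It then remains to check that $n^{-2} \leq \frac{1}{3e^2 n\sqrt{6}}$, which rearranges to $n \geq 3e^2\sqrt{6} \approx 54.3$; this is covered by the standing assumption that $n$ is sufficiently large, completing the argument.

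This lemma carries no genuine obstacle: the only points requiring care are the integrality bookkeeping in passing from the defining inequality of $U$ to a clean real-valued threshold, and making sure to verify that the resulting bound $n^{-2}$ actually clears the prescribed constant $\frac{1}{3e^2 n\sqrt{6}}$ rather than merely matching its order of magnitude in $n$.
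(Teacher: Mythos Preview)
Your proposal is correct and follows essentially the same approach as the paper: apply Okamoto's inequality (\ref{eq:okamoto}) with $\mu=n/2$ and $x=\frac{n}{2}+\sqrt{n\ln n}$ to get $\Pr[B\geq x]\leq n^{-2}$, then observe that this is at most $\frac{1}{3e^2 n\sqrt{6}}$ for large $n$ and invoke the definition of $U$. You are simply more explicit than the paper about the integrality bookkeeping with $c^*=\lfloor \frac{n}{2}+\sqrt{n\ln n}\rfloor$, which is a harmless (and arguably helpful) addition.
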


\begin{proof}
	A simple application of the Chernoff bound (inequality (\ref{eq:okamoto}) from Lemma~\ref{lem:chernoff}) to the binomial random variable $B\sim \bin(n,1/2)$ yields $\Pr\left[B\geq \frac{n}{2}+\sqrt{n\ln{n}}\right]\leq \frac{1}{n^2}\leq \frac{1}{3e^2\sqrt{6}}$. The lemma then follows by the definition of $U$.
\end{proof}

\begin{lemma}\label{lem:lb-step2}
	For the random variable $B\sim\bin(n,1/2)$, it holds that
	\begin{align*}
	\Pr[B=x] &\geq \frac{2L-n}{n}\cdot \Pr[B\geq x],
	\end{align*}
	for every integer $x \geq L$.
\end{lemma}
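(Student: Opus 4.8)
The plan is to establish this as a sharp lower bound on the binomial hazard function at $p=1/2$, by comparing the single point mass $\Pr[B=x]$ with the entire upper tail $\Pr[B\geq x]$ through the ratios of consecutive probabilities. First I would record that for $B\sim\bin(n,1/2)$ the ratio of successive point masses is
\[
\frac{\Pr[B=y+1]}{\Pr[B=y]}=\frac{\binom{n}{y+1}}{\binom{n}{y}}=\frac{n-y}{y+1},
\]
which is strictly decreasing in $y$. By Lemma~\ref{lem:lb-step1} we have $L>n/2$, so for every $y\geq x\geq L$ this ratio is bounded above by the single constant $\rho:=\frac{n-L}{L}<1$ (using that $\frac{n-y}{y+1}\leq\frac{n-y}{y}$ is decreasing and evaluated worst at $y=L$).

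The second step is to bound the tail by a geometric series. Writing $\Pr[B=y]=\Pr[B=x]\prod_{j=x}^{y-1}\frac{n-j}{j+1}\leq \Pr[B=x]\,\rho^{\,y-x}$ for each integer $y$ with $x\leq y\leq n$, and summing over $y$, gives
\[
\Pr[B\geq x]\leq \Pr[B=x]\sum_{k\geq 0}\rho^{k}=\frac{\Pr[B=x]}{1-\rho}.
\]
Rearranging yields $\Pr[B=x]\geq(1-\rho)\Pr[B\geq x]=\frac{2L-n}{L}\,\Pr[B\geq x]$, and since $2L-n>0$ (again by Lemma~\ref{lem:lb-step1}) together with $L\leq n$ we have $\frac{2L-n}{L}\geq\frac{2L-n}{n}$, which is exactly the claimed inequality. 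The cases $x>n$ are vacuous since both sides vanish.

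There is no genuine obstacle here; the only two points requiring care are that the geometric bound be valid, which needs $\rho<1$ and hence $L>n/2$ as supplied by Lemma~\ref{lem:lb-step1}, and that one deliberately bounds \emph{every} tail ratio by the $x$-independent constant $\frac{n-L}{L}$ rather than by the sharper $\frac{n-x}{x+1}$. Doing so is precisely what makes the resulting factor $\frac{2L-n}{L}$ uniform in $x$ and, after weakening $L$ to $n$ in the denominator, collapse to the factor $\frac{2L-n}{n}$ appearing in the statement — a form that can then be fed directly into the lower-bound computation of~(\ref{eq:D}).
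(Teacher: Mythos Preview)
Your argument is correct and follows essentially the same route as the paper's proof: both bound $\Pr[B=y]/\Pr[B=x]$ by $\left(\frac{n-L}{L}\right)^{y-x}$ for $y\geq x\geq L$, sum the resulting geometric series to get $\Pr[B\geq x]\leq \frac{L}{2L-n}\Pr[B=x]$, and then weaken $L$ to $n$ in the numerator (equivalently, the denominator after rearranging). The only cosmetic difference is that the paper bounds the ratio $\binom{n}{y}/\binom{n}{x}$ directly as $\frac{x!(n-x)!}{y!(n-y)!}\leq\left(\frac{n-x}{x}\right)^{y-x}$, whereas you telescope via consecutive ratios $\frac{n-j}{j+1}$; these are the same computation.
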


\begin{proof}
	Consider integers $x,y$ with $L\leq x\leq y$. By the definition of the binomial distribution $\bin(n,1/2)$, we have
	\begin{align*}
	\frac{\Pr[B=y]}{\Pr[B=x]} &=\frac{{n\choose y}}{{n\choose x}} =\frac{x! (n-x)!}{y! (n-y)!}\leq \left(\frac{n-x}{x}\right)^{y-x}\leq \left(\frac{n-L}{L}\right)^{y-x}.
	\end{align*}
	Hence,
	\begin{align*}
	\Pr[B\geq x] &= \sum_{y=x}^{n}\Pr[B=y] \leq \Pr[B=x] \cdot \sum_{y=x}^n{\left(\frac{n-L}{L}\right)^{y-x}}\\
	&\leq \frac{L}{2L-n} \cdot \Pr[B=x] \leq \frac{n}{2L-n} \cdot \Pr[B=x],
	\end{align*}
	and the lemma follows by rearranging.
\end{proof}

The proof of the next lemma follows a similar roadmap with the proof of Lemma~\ref{lem:technical} in Section~\ref{sec:apriori} but is considerably simpler. In the proof, we will use the following claim, which will also be useful later. The proof follows easily by the definition of the binomial distribution.

\begin{claim}\label{claim:expo}
	For the random variable $B\sim\bin(n,1/2)$ and integers $x$ and $y$ with $x\leq y$, it holds that 
	\begin{align*}
	\Pr[B=x] &\leq \left(\frac{y}{n-y}\right)^{y-x}\cdot \Pr[B=y].
	\end{align*}
\end{claim}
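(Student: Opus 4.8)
The plan is to compute the ratio $\Pr[B=x]/\Pr[B=y]$ directly and show it is bounded above by $\left(\frac{y}{n-y}\right)^{y-x}$, after which the claim follows by multiplying through by $\Pr[B=y]$. Since $B\sim\bin(n,1/2)$, each point mass is $\Pr[B=k]={n\choose k}2^{-n}$, so the common factor $2^{-n}$ cancels and we are left with a ratio of binomial coefficients,
$$\frac{\Pr[B=x]}{\Pr[B=y]}=\frac{{n\choose x}}{{n\choose y}}=\frac{y!\,(n-y)!}{x!\,(n-x)!}.$$

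The key step is to split this ratio into two telescoping products, each consisting of exactly $y-x$ factors, and to bound every factor uniformly. First, $\frac{y!}{x!}=(x+1)(x+2)\cdots y$ is a product of $y-x$ consecutive integers, each at most $y$, so $\frac{y!}{x!}\leq y^{y-x}$. Second, $\frac{(n-y)!}{(n-x)!}=\frac{1}{(n-y+1)(n-y+2)\cdots(n-x)}$ is the reciprocal of a product of $y-x$ consecutive integers, each strictly larger than $n-y$, so $\frac{(n-y)!}{(n-x)!}\leq (n-y)^{-(y-x)}$. Multiplying the two bounds yields $\frac{\Pr[B=x]}{\Pr[B=y]}\leq \left(\frac{y}{n-y}\right)^{y-x}$, which is precisely the claim.

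There is no genuine obstacle here; the argument is the routine factorial manipulation the paper alludes to, and the degenerate case $x=y$ is handled automatically, since both products are then empty (value $1$) and the exponent is $0$. The only point worth checking is that the denominator factors $n-y+1,\dots,n-x$ are positive and exceed $n-y$, which holds whenever $y\leq n$, with $y<n$ ensuring that the bound $(n-y)^{-(y-x)}$ is finite; this is the regime in which the claim is subsequently applied.
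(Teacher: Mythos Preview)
Your proof is correct and takes essentially the same approach as the paper: both compute the ratio $\Pr[B=x]/\Pr[B=y]$ as a ratio of binomial coefficients and bound the resulting factorial quotient term by term. Your version is simply more explicit than the paper's one-line derivation, and you state the correct intermediate expression $\frac{y!\,(n-y)!}{x!\,(n-x)!}$ (the paper in fact has a typo there, writing $\frac{y!(n-x)!}{x!(n-y)!}$).
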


\begin{proof}
By the definition of the binomial distribution $\bin(n,1/2)$, we have 
\begin{align}\nonumber
\frac{\Pr[B=x]}{\Pr[B=y]} &= \frac{{n \choose x}}{{n \choose y}}=\frac{y!(n-x)!}{x!(n-y)!} \leq \left(\frac{y}{n-y}\right)^{y-x}. \qedhere
\end{align} 	
\end{proof}

\begin{lemma}\label{lem:pdf-vs-one-minus-cdf-p-equals-half}
	For the random variable $B\sim \bin(n,1/2)$, it holds that
	\begin{align*}
	\Pr[B=U] &\leq \frac{8}{3e\sqrt{6}} \frac{\sqrt{\ln{n}}}{n^{3/2}}.
	\end{align*}
\end{lemma}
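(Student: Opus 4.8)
The plan is to mirror the hazard-function argument of Lemma~\ref{lem:technical}, but in the much simpler symmetric-binomial setting, using Claim~\ref{claim:expo} in place of the long ratio computation. The idea is to show that $\Pr[B=U]$ is comparable, up to a factor of $e$, to each of the probabilities $\Pr[B=U+1],\dots,\Pr[B=U+r]$ across a window of width $r\approx\tfrac18\sqrt{n/\ln n}$, and then to sum these comparisons so as to relate $\Pr[B=U]$ to the tail $\Pr[B>U]$, which the definition of $U$ controls directly.

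First I would fix $r=\lceil\tfrac18\sqrt{n/\ln n}\rceil$ and establish the factor-$e$ comparison $\Pr[B=U]\le e\,\Pr[B=y]$ for every integer $y$ with $U<y\le U+r$. By Claim~\ref{claim:expo}, $\Pr[B=U]\le\bigl(\tfrac{y}{n-y}\bigr)^{y-U}\Pr[B=y]$, so it suffices to verify $(y-U)\ln\frac{y}{n-y}\le 1$. This is exactly where Lemma~\ref{lem:chernoff-U} does the heavy lifting: it gives $U\le\tfrac n2+\sqrt{n\ln n}$, so writing $y=\tfrac n2+s$ we have $s\le\sqrt{n\ln n}+r=(1+o(1))\sqrt{n\ln n}$ while $y-U\le r$. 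Since $s/n\to 0$, the elementary bound $\ln\frac{y}{n-y}=\ln\frac{n/2+s}{n/2-s}\le\frac{4s}{n-2s}=(1+o(1))\frac{4s}{n}$ yields $(y-U)\ln\frac{y}{n-y}\le(1+o(1))\frac{4rs}{n}\le\tfrac12+o(1)\le 1$ for large $n$, which is the required comparison.

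Next I would sum $\Pr[B=U]\le e\,\Pr[B=y]$ over $y=U+1,\dots,U+r$, obtaining $r\,\Pr[B=U]\le e\sum_{y=U+1}^{U+r}\Pr[B=y]\le e\,\Pr[B>U]$, that is, $\Pr[B=U]\le\frac er\,\Pr[B>U]\le 8e\sqrt{\tfrac{\ln n}{n}}\,\Pr[B>U]$, using $r\ge\tfrac18\sqrt{n/\ln n}$. Finally, the definition of $U$ as the least integer with $\Pr[B>U]\le\frac{1}{3e^2n\sqrt6}$ gives $\Pr[B=U]\le 8e\sqrt{\tfrac{\ln n}{n}}\cdot\frac{1}{3e^2n\sqrt6}=\frac{8}{3e\sqrt6}\cdot\frac{\sqrt{\ln n}}{n^{3/2}}$, which is the claimed bound.

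The only delicate point is the constant bookkeeping in the window argument: one must check that the factor-$e$ comparison holds \emph{uniformly} across the entire window $\{U+1,\dots,U+r\}$, and it is precisely the upper estimate $U\le\tfrac n2+\sqrt{n\ln n}$ from Lemma~\ref{lem:chernoff-U} that keeps $\frac{y}{n-y}$ close to $1$ and makes this possible. Everything else is a short computation; note that the factor $\frac{1}{3e^2}$ buried in the definition of $U$ is tuned so that its product with the $8e$ coming from the window width produces exactly the stated constant $\frac{8}{3e\sqrt6}$.
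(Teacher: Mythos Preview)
Your argument is correct and follows essentially the same route as the paper. Both proofs use Claim~\ref{claim:expo} to obtain $\Pr[B=U]\le\exp\bigl(\frac{(y-U)(2y-n)}{n-y}\bigr)\Pr[B=y]$, show via Lemma~\ref{lem:chernoff-U} that this exponent stays at most $1$ for roughly $\tfrac18\sqrt{n/\ln n}$ consecutive values $y>U$, sum to pass to the tail $\Pr[B>U]$, and finish with the defining bound on $\Pr[B>U]$. The only cosmetic difference is that the paper first solves the quadratic $(y-U)(2y-n)\le n-y$ and then lower-bounds the resulting window width, whereas you fix the window width $r$ up front and verify the exponent bound holds throughout; the two orderings are equivalent.
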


\begin{proof}
	By Claim~\ref{claim:expo}, we have 
	\begin{align}\label{eq:exponential-bound}
	\frac{\Pr[B=U]}{\Pr[B=y]} &\leq \left(\frac{y}{n-y}\right)^{y-U}\leq \exp\left(\frac{(y-U)(2y-n)}{n-y}\right)
	\end{align} 
	for every integer $y>U$. The second inequality follows since $e^z\geq 1+z$ for $z\in \R$. By selecting $y$ such that 
	\begin{align}\label{eq:range-ineq}
	\frac{(y-U)(2y-n)}{n-y} &\leq 1,
	\end{align}
	we get 
	\begin{align}\label{eq:e-bound}
	\frac{\Pr[B=U]}{\Pr[B=y]} &\leq e.
	\end{align}
	Solving inequality (\ref{eq:range-ineq}), we get that the range of values for $y$ so that (\ref{eq:e-bound}) is true satisfies
	\begin{align*}
	U&<y\leq \frac{2U+n-1+\sqrt{(2U-n)^2+6n-4U+1}}{4}.
	\end{align*}
	Hence, the number of integer values for $y$ so that $y>U$ and (\ref{eq:range-ineq}) is satisfied is at least
	\begin{align}\nonumber
	&\frac{2U+n-1+\sqrt{(2U-n)^2+6n-4U+1}}{4}-U-1\\\label{eq:range-int}
	&= \frac{\sqrt{(2U-n)^2+6n-4U+1}-2U+n-5}{4}.
	\end{align}
	Now observe that the quantity at the RHS of (\ref{eq:range-int}) is non-increasing with respect to $U$ since its derivative  
	\begin{align*}
	\frac{2U-n-1}{2\sqrt{(2U-n)^2+6n-4U+1}}-\frac{1}{2}
	\end{align*}
	is non-positive. So, we can bound the RHS of (\ref{eq:range-int}) from below using the upper bound on $U$ from Lemma~\ref{lem:chernoff-U}. We get
	\begin{align*}
	&\frac{\sqrt{(2U-n)^2+6n-4U+1}-2U+n-5}{4}\\
	&= \frac{16n-24U-25}{4\left(\sqrt{(2U-n)^2+6n-4U+1}+2U-n+5\right)}\\
	&\geq \frac{4n-24\sqrt{n\ln{n}}-25}{4\left(\sqrt{4n\ln{n}+4n-4\sqrt{n\ln{n}}+1}+2\sqrt{n\ln{n}}-n+5\right)}\\
	&\geq \frac{1}{8}\sqrt{\frac{n}{\ln{n}}}.
	\end{align*}
	In the last inequality, we have used $24\sqrt{n\ln{n}}+25\leq n$ to lower-bound the numerator by $3n$ and $5\leq \sqrt{n\ln{n}}$ to upper-bound the parenthesis in the denominator by $6\sqrt{n\ln{n}}$.
	
	Now, let $r=\left\lceil\frac{1}{8}\sqrt{\frac{n}{\ln{n}}}\right\rceil$. Multiplying inequality (\ref{eq:e-bound}) by $1/r$ and summing these inequalities for $y=U+1, ..., U+r$, we have
	\begin{align*}
	\Pr[B=U] &\leq \frac{e}{r} \sum_{y=U+1}^{U+r}{\Pr[B=y]} \leq \frac{e}{r} \cdot \Pr[B>U]\leq \frac{8}{3e\sqrt{6}} \frac{\sqrt{\ln{n}}}{n^{3/2}},
	\end{align*} 
	as desired. The last inequality follows by the definition of $U$.
\end{proof}

\begin{lemma}\label{lem:lb-step3}
	$U-L \geq \frac{1}{6}\sqrt{\frac{n}{\ln{n}}}$.
\end{lemma}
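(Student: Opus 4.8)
The plan is to lower-bound $U-L$ by arguing that the upper tail $\Pr[B>c]$ cannot fall from its value near $L$ down to the level that defines $U$ in fewer than $m:=\left\lceil\frac{1}{6}\sqrt{n/\ln n}\right\rceil$ steps. Set $\tau:=\frac{1}{3e^2n\sqrt6}$, so that $U$ is the lowest integer $c$ with $\Pr[B>c]\le\tau$. Since $\Pr[B>c]$ is non-increasing in $c$, for every integer $v$ we have $U\ge v$ if and only if $\Pr[B\ge v]>\tau$. Hence it suffices to prove the single tail estimate $\Pr[B\ge L+m]>\tau$: this gives $U\ge L+m$ and therefore $U-L\ge m\ge\frac16\sqrt{n/\ln n}$, which is the claim.

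To bound $\Pr[B\ge L+m]$ from below I compare the tail at $L$ with the same tail shifted up by $m$. For each integer $c$ with $L\le c\le U$, Claim~\ref{claim:expo} applied with $x=c$ and $y=c+m$ yields $\Pr[B=c]\le\left(\frac{c+m}{n-c-m}\right)^m\Pr[B=c+m]\le\rho\,\Pr[B=c+m]$, where $\rho:=\left(\frac{U+m}{n-U-m}\right)^m$, because the base $\frac{c+m}{n-c-m}$ is increasing in $c$ and hence maximized at $c=U$. Decomposing the tail exactly and summing these inequalities, I get the key estimate
\[
\Pr[B\ge L]=\sum_{c=L}^{U}\Pr[B=c]+\Pr[B>U]\le\rho\sum_{c=L}^{U}\Pr[B=c+m]+\tau\le\rho\,\Pr[B\ge L+m]+\tau,
\]
using $\sum_{c=L}^{U}\Pr[B=c+m]\le\Pr[B\ge L+m]$ and $\Pr[B>U]\le\tau$. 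The definition of $L$ gives $\Pr[B\ge L]=\Pr[B>L-1]\ge\frac{1}{n\sqrt2}$, so rearranging produces $\Pr[B\ge L+m]\ge\frac1\rho\left(\frac{1}{n\sqrt2}-\tau\right)$.

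Two numerical facts then finish the proof. First, a one-line computation shows that $\frac1\rho\left(\frac{1}{n\sqrt2}-\tau\right)>\tau$ is equivalent to $\rho<\frac{3e^2\sqrt6}{\sqrt2}-1=3e^2\sqrt3-1\approx 37.4$, so I only need $\rho$ to be an absolute constant. Second, I bound $\rho$ using the upper bound $U\le\frac n2+\sqrt{n\ln n}$ from Lemma~\ref{lem:chernoff-U} together with $m\le\frac16\sqrt{n/\ln n}+1$: these give $\frac{U+m}{n-U-m}\le\frac{1+x}{1-x}$ with $x=O\!\left(\sqrt{\ln n/n}\right)$, whence $\ln\rho\le m\cdot\ln\frac{1+x}{1-x}\le 2mx\,(1+o(1))=\frac23(1+o(1))$, so $\rho\le e^{2/3+o(1)}<2$ for large $n$, comfortably below the threshold $37.4$. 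Combining the two facts gives $\Pr[B\ge L+m]>\tau$ and hence $U-L\ge m$.

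The main obstacle, and the step where the constant $\frac16$ is pinned down, is controlling $\rho$: shifting the tail by $m$ positions multiplies each point mass by a factor that grows like $\exp\!\big(\Theta(m\sqrt{\ln n/n})\big)$, so $m$ must be small enough (of order $\sqrt{n/\ln n}$ with a small leading constant) to keep $\rho$ below $3e^2\sqrt3-1$, yet large enough to make $U-L$ as large as claimed; the value $m\approx\frac16\sqrt{n/\ln n}$ balances these. Everything else is either an unwinding of the definitions of $L$ and $U$ or a routine application of Claim~\ref{claim:expo} and Lemma~\ref{lem:chernoff-U}, and I expect to have substantial slack in the final inequality.
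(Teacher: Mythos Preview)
Your argument is correct. The logic is clean: you reduce to showing $\Pr[B\ge L+m]>\tau$, and you obtain this by the shift inequality $\Pr[B\ge L]\le\rho\,\Pr[B\ge L+m]+\tau$ coming from Claim~\ref{claim:expo}, together with the lower bound $\Pr[B\ge L]\ge\frac{1}{n\sqrt2}$ from the definition of $L$. The bound on $\rho$ via Lemma~\ref{lem:chernoff-U} is routine, and you have enormous slack ($\rho<2$ versus the threshold $3e^2\sqrt3-1\approx37.4$).

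The paper proceeds differently. Instead of a shift-and-compare argument on the tails, it sandwiches the \emph{interval} probability $\Pr[L\le B\le U]$: the definitions of $L$ and $U$ give the lower bound $\Pr[L\le B\le U]\ge\frac{2}{3n}$, while the upper bound is obtained by comparing each $\Pr[B=x]$ to $\Pr[B=U]$ via Claim~\ref{claim:expo}, summing a geometric series, and then invoking the separate Lemma~\ref{lem:pdf-vs-one-minus-cdf-p-equals-half} (the bound $\Pr[B=U]\le\frac{8}{3e\sqrt6}\frac{\sqrt{\ln n}}{n^{3/2}}$). Combining the two sides yields a lower bound on $\frac{2U-n}{n-U}(U-L+1)$, from which $U-L\ge\frac16\sqrt{n/\ln n}$ follows using Lemma~\ref{lem:chernoff-U}.

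Your route is arguably more economical: it bypasses Lemma~\ref{lem:pdf-vs-one-minus-cdf-p-equals-half} entirely and needs only Claim~\ref{claim:expo} and Lemma~\ref{lem:chernoff-U}. The paper's approach, on the other hand, makes the dependence on $U-L$ appear explicitly through the geometric sum rather than through a pre-chosen shift $m$, which some readers may find more transparent. Both rely on the same underlying fact that the binomial point masses decay only geometrically near $U$, just packaged differently.
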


\begin{proof}
	Let $B\sim\bin(n,1/2)$. Using the definition of $U$ and $L$, we have
\begin{align}\nonumber
	\frac{2}{3n} &\leq \frac{1}{n\sqrt{2}}-\frac{1}{3e^2n\sqrt{6}} \leq \Pr[L\leq B\leq U]\\\nonumber
	&\leq \sum_{x=L}^U{\left(\frac{U}{n-U}\right)^{U-x}}\cdot \Pr[B=U]\\\nonumber
	&\leq \left(\frac{U}{n-U}\right)^{U-L+1}\cdot \frac{n-U}{2U-n}\cdot \frac{8}{3e\sqrt{6}}\frac{\sqrt{\ln{n}}}{n^{3/2}} \nonumber \\\nonumber
	&\leq \exp\left(\frac{2U-n}{n-U}(U-L+1)\right)\cdot \frac{n-L}{2L-n}\cdot \frac{8}{3e\sqrt{6}}\frac{\sqrt{\ln{n}}}{n^{3/2}}\\\label{eq:U-L}
	&\leq \exp\left(\frac{2U-n}{n-U}(U-L+1)\right)\cdot \frac{2}{3en}.
	\end{align}
	The first inequality is obvious, while the second one uses the definition of $U$ and $L$ (recall that $\Pr[B\geq L]\geq \frac{1}{n\sqrt{2}}$ and $\Pr[B>U]\leq \frac{1}{4e^2n\sqrt{3}}$). The third inequality follows by Claim~\ref{claim:expo}. The fourth inequality follows by Lemma~\ref{lem:pdf-vs-one-minus-cdf-p-equals-half}, the fifth one follows since $U\geq L$ and by the definition of $U$, and the sixth one follows by Lemma~\ref{lem:lb-step1} and the fact $L\geq n/2$.
	
	By Lemma~\ref{lem:chernoff-U} and due to the high value of $n$, $2n-3U\geq n/3$. Hence, inequality (\ref{eq:U-L}) implies that 
	\begin{align*}
	U-L &\geq \frac{2n-3U}{2U-n} \geq \frac{1}{6}\sqrt{\frac{n}{\ln{n}}},
	\end{align*}
	as desired.
\end{proof}

\begin{lemma}\label{lem:n-1}
	Let $B\sim\bin(n,1/2)$ and $B'\sim\bin(n-1,1/2)$. For every integer $x\in [L+1,U]$, $\Pr[B'=x]\geq \frac{2}{3} \Pr[B=x]$.
\end{lemma}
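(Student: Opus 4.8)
The plan is to compute the ratio $\Pr[B'=x]/\Pr[B=x]$ directly from the two binomial probability mass functions and then reduce the statement to a trivial upper bound on $x$. First I would write out the densities as $\Pr[B=x]=\binom{n}{x}2^{-n}$ and $\Pr[B'=x]=\binom{n-1}{x}2^{-(n-1)}$, so that the powers of $2$ combine into a single factor of $2$. Using the elementary identity $\binom{n-1}{x}=\frac{n-x}{n}\binom{n}{x}$, this yields
\begin{align*}
\frac{\Pr[B'=x]}{\Pr[B=x]} &= 2\cdot\frac{\binom{n-1}{x}}{\binom{n}{x}} = 2\cdot\frac{n-x}{n} = 2\left(1-\frac{x}{n}\right).
\end{align*}

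With this clean expression in hand, the desired bound $\Pr[B'=x]\geq \frac{2}{3}\Pr[B=x]$ is equivalent to $2(1-x/n)\geq 2/3$, i.e. to $x\leq 2n/3$. Since every $x$ in the relevant range satisfies $x\leq U$, it suffices to control the upper boundary $U$. I would invoke Lemma~\ref{lem:chernoff-U}, which gives $U\leq \frac{n}{2}+\sqrt{n\ln n}$, and then observe that for $n$ sufficiently large we have $\sqrt{n\ln n}\leq n/6$ (equivalently $\ln n\leq n/36$), so that $U\leq \frac{n}{2}+\frac{n}{6}=\frac{2n}{3}$. This forces $x\leq 2n/3$ for all $x\in[L+1,U]$ and completes the argument.

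There is essentially no genuine obstacle here: the whole lemma rests on the cancellation that collapses the ratio of the two binomial densities into the single factor $2(1-x/n)$, after which the previously established bound on $U$ does all the work. The only point requiring a word of justification is the large-$n$ estimate $\sqrt{n\ln n}\leq n/6$, which is where the standing assumption that $n$ is sufficiently large is used; it is routine and I would simply state it.
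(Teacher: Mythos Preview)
Your proposal is correct and essentially identical to the paper's own proof: both compute $\Pr[B'=x]=2\frac{n-x}{n}\Pr[B=x]$ from the binomial pmf and then use Lemma~\ref{lem:chernoff-U} together with the large-$n$ assumption to conclude $x\leq U\leq 2n/3$. There is nothing to add.
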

\begin{proof}
	By the definition of the binomial distribution, Lemma~\ref{lem:chernoff-U}, and the facts that $x\leq U$ and that $n$ is large (the last two imply that $x\leq 2n/3$), we have 
	\begin{align*}
	\Pr[B'=x] &={n-1\choose x}2^{-n+1}= 2\frac{n-x}{n}{n\choose x}2^{-n}\geq \frac{2}{3}\Pr[B=x]. \qedhere
	\end{align*}
\end{proof}

We are now ready to bound $\E[(L-n/2)\one\{D\}]$ from below. Using equation (\ref{eq:D}), and the lemmas above, we have
\begin{align*}
\E(L-n/2)\one\{D\} &\geq \frac{1}{2} \left(L-\frac{n}{2}\right) \cdot \sum_{d=L+1}^U{{n\choose 2} \Pr[B'=d]^2\Pr[B\leq d-1]^{n-2}}\\
&\geq \frac{2}{9} \left(L-\frac{n}{2}\right) \cdot \sum_{d=L+1}^U{{n\choose 2} \Pr[B=d]^2\Pr[B\leq d-1]^{n-2}}\\
&\geq \frac{8}{9n^2} \left(L-\frac{n}{2}\right)^3 \sum_{d=L+1}^U{{n\choose 2} \Pr[B\geq d]^2\Pr[B\leq d-1]^{n-2}}\\
&\geq \frac{8}{9n^2} \left(L-\frac{n}{2}\right)^3(U-L) {n\choose 2} \left(\frac{1}{3e^2n\sqrt{6}} \right)^2 \left(1-\frac{1}{n\sqrt{2}}\right)^{n-2}\\
&\geq \frac{1}{6561 \, e^{5}\sqrt{6}} \cdot\ln{n}.
\end{align*}
The second inequality follows by Lemma~\ref{lem:n-1}. The third inequality follows by Lemma~\ref{lem:lb-step2}. The fourth inequality follows by the definition of $L$ and $U$. Finally, the fifth inequality follows by Lemma~\ref{lem:lb-step1}, Lemma~\ref{lem:lb-step3}, and the fact $\left(1-\frac{1}{n\sqrt{2}}\right)^{n-2}\geq 1/e$.  Theorem~\ref{thm:lb-uniform} follows. \qed

\section{Open problems}\label{sec:open}
Our polylogarithmic upper bound in Section~\ref{sec:apriori} shows
that prior information can yield dramatic improvements on the
performance of simple impartial selection mechanisms. It also gives
hope that AVD could be similarly efficient for the more general
opinion poll instances. Unfortunately, this is not true as the
following counter-example indicates.

Indeed, starting from a uniform instance with $n+1$ nodes of
popularity $p=1/2$, we add a new copy $j'$ for each node $j$. Also,
for every edge $(i,j)$ realized in the original instance, we add the
edge $(i,j')$. In this way, we construct opinion poll instances with
$2(n+1)$ nodes, where no node can ever beat all the other
nodes. Hence, in such instances, AVD will behave as the constant
mechanism in the original instance and will always return the default
node as winner. By applying Theorem~\ref{LB:GeneralmodelLB} we
obtain the following negative result for AVD. 

\begin{theorem}\label{thm:lb-opinion-poll}
  When applied on opinion poll instances, the AVD mechanism has
  expected additive approximation $\Omega(\sqrt{n\ln{n}})$.
\end{theorem}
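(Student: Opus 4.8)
The construction in the paragraph preceding the statement already does most of the work; the plan is to turn it into a clean reduction to Theorem~\ref{LB:GeneralmodelLB}. Fix the uniform prior with $p=1/2$ on the $n+1$ original nodes and, for each original node $j$, introduce a twin $j'$, adding the edge $(i,j')$ exactly when the edge $(i,j)$ is realized (the twins themselves cast no votes). This defines an opinion poll distribution on $2(n+1)$ nodes. Since $\E[d_j(\x)]=\E[d_{j'}(\x)]=n/2$ for every node, the default node $t$ chosen by AVD is an arbitrary node of this common expected in-degree, and the argument below is insensitive to which one it is.

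The crux is the structural claim that AVD always outputs $t$. First I would verify that $j$ and $j'$ always have identical in-degrees: both equal the number of original voters $i\neq j$ that realize $(i,j)$, because such an $i$ approves $j$ iff it approves $j'$ and neither twin votes. More is true: excluding the incoming edges from any triple $\{j,j',t\}$ leaves the in-degrees equal, i.e.\ $d_j(N_{j,j',t},\x)=d_{j'}(N_{j,j',t},\x)$, since neither $j$ nor $j'$ is a source of an edge into the other (no self-loops, and copies do not vote) and $t$ contributes the same indicator to both. Consequently neither twin beats the other. Because every node has a twin with which it is tied, no node beats every other node, and by the definition of the mechanism AVD returns the default $t$. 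Hence AVD behaves exactly like the constant mechanism applied to the original uniform instance.

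It remains to lower-bound $\E[\Delta(\x)-d_t(\x)]=\E[\Delta(\x)]-n/2$. Since $d_{j'}(\x)=d_j(\x)$ for all $j$, the maximum in-degree over all $2(n+1)$ nodes equals the maximum over the $n+1$ original nodes, which are i.i.d.\ $\bin(n,1/2)$. At least $n$ of these are distinct from $t$, so the inverse-Chernoff argument of Theorem~\ref{LB:GeneralmodelLB} (via Corollary~\ref{cor:inverse}) applies verbatim and yields $\E[\Delta(\x)]\geq n/2+\frac{1}{6}\sqrt{n\ln n}$, giving the claimed $\Omega(\sqrt{n\ln n})$ bound.

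The main obstacle is purely in the structural claim: one must carefully track the exclusion set $N_{j,k,t}$ in the definition of ``beats'' and confirm that twins remain tied even after removing the edges from $\{j,j',t\}$---in particular checking that the twins cast no votes and cannot nominate each other. Everything past that point is a mechanical reduction to the already-established Theorem~\ref{LB:GeneralmodelLB}.
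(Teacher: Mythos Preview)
Your proposal is correct and follows exactly the paper's approach: the construction with twin copies, the observation that twins are always tied (so no node beats every other and AVD degenerates to the constant mechanism), and the reduction to Theorem~\ref{LB:GeneralmodelLB}. You supply more detail on the structural claim than the paper's one-paragraph sketch, in particular the verification that $d_j(N_{j,j',t},\x)=d_{j'}(N_{j,j',t},\x)$, but the argument is the same; the only omission is the explicit case $t\in\{j,j'\}$, which is handled by the default-node variant of ``beats'' and goes through identically.
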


We should note that the above construction is fragile, in the sense
that it exploits a very specific aspect of the mechanism. So, still,
the quest of designing deterministic mechanisms that achieve
polylogarithmic additive approximation in the opinion poll model is
very important and challenging. A starting step could be to restrict
our attention to the instances considered in~\cite{moulin2013}, in
which every voter approves exactly one other candidate.

Finally, throughout the paper, we have assumed that the prior
information is reliable. This should not be expected to be the case in
practice. We expect that our results on the constant mechanism still
hold if we have a rough estimate of the highest in-degree. Highest
accuracy seems to be necessary to recover our polylogarithmic upper
bound though. This issue is also related to the strengths of
prior-independent mechanisms (e.g., see Section 4.3
of~\cite{hartline2013bayesian}) and needs to be investigated further.


\section*{Acknowledgements}
This work was partially supported by COST Action 16228 ``European Network for Game Theory''.

\bibliographystyle{plain}
\bibliography{thebib}
\appendix
\section{Appendix: multiplicative approximation and voter correlation }
In the following, we briefly justify two main decisions that we have taken. First, we show that knowing the prior cannot help us improve the approximation ratio of $2$ that is best possible for worst-case inputs. This explains why we have completely ignored the study of multiplicative approximations when prior information is available. 

We extend the  approximation ratio $\rho$ of a mechanism $f$ against a prior $\mathbf{P}$ as follows: 
\begin{align*}
	\rho &= \frac{\E_{\x\sim\mathbf{P}}[\Delta(\x)]}{\E_{\x\sim\mathbf{P}}[d_{f(\x)}(\x)]}
\end{align*}
	
\begin{theorem}
	For every $\epsilon>0$, no impartial selection mechanism has approximation ratio better than $2-\epsilon$ against all uniform priors.
\end{theorem}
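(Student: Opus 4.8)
The plan is to exhibit, for each $\epsilon>0$, a single uniform prior on which \emph{every} impartial mechanism has ratio at least $2-\epsilon$; this is stronger than, and immediately implies, the stated claim. The witness I would use is the smallest possible instance: two agents $N=\{1,2\}$ under the uniform prior with popularity $p$, where each of the two potential edges $(1,2)$ and $(2,1)$ is present independently with probability $p$, and I will take $p\le\epsilon$.

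First I would enumerate the four realizable profiles together with their probabilities: the empty graph (probability $(1-p)^2$, maximum in-degree $0$), each of the two single-edge graphs $\{(1,2)\}$ and $\{(2,1)\}$ (each probability $p(1-p)$, maximum in-degree $1$), and the mutual-nomination graph $\{(1,2),(2,1)\}$ (probability $p^2$, maximum in-degree $1$). Summing over the profiles with maximum in-degree $1$ gives $\E[\Delta(\x)]=p(2-p)$. The whole difficulty is then to upper bound $\E[d_{f(\x)}(\x)]$ for an arbitrary impartial $f$, since the empty graph always contributes $0$ and the mutual-nomination graph always contributes $1$ (both nodes have in-degree $1$ there).

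The heart of the argument is the impartiality constraint: whether $f$ returns node $i$ on a profile is invariant under changing node $i$'s own outgoing edges. Applying this to the mutual-nomination profile $M=\{(1,2),(2,1)\}$, I compare it with $\{(2,1)\}$, which differs from $M$ only in node $1$'s outgoing edge, forcing $f(\{(2,1)\})=1 \iff f(M)=1$; and I compare it with $\{(1,2)\}$, which differs only in node $2$'s outgoing edge, forcing $f(\{(1,2)\})=2 \iff f(M)=2$. Since exactly one node wins $M$, a short two-case split on the winner of $M$ shows that at most one of the two single-edge profiles can have its unique in-degree-$1$ node selected, while on the other single-edge profile the selected winner has in-degree $0$.

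Consequently $\E[d_{f(\x)}(\x)]\le p^2\cdot 1 + p(1-p)\cdot 1 + p(1-p)\cdot 0 = p$, so the ratio is at least $\frac{p(2-p)}{p}=2-p\ge 2-\epsilon$. The only bookkeeping point is that both numerator and denominator are strictly positive, so the ratio is well defined, which holds since $p>0$. I do not expect a genuine obstacle: this is essentially the standard two-node lower bound of Alon et al., re-expressed as an expectation over the uniform prior. The one place requiring care is translating impartiality correctly into the equivalences above, since a sign error there would spuriously permit the mechanism to select the in-degree-$1$ node in \emph{both} single-edge profiles and thereby collapse the bound.
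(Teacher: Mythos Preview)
Your proposal is correct and follows essentially the same approach as the paper: the two-node uniform instance with small popularity $p$, the computation $\E[\Delta(\x)]=p(2-p)$, the impartiality comparison between the mutual-nomination profile and each single-edge profile, and the resulting bound $\E[d_{f(\x)}(\x)]\le p$ are all identical to the paper's proof. The only minor difference is that the paper phrases the impartiality step via probabilities $q_u,q_v$ of each node winning the mutual-nomination profile, which lets the argument cover randomized (and possibly abstaining) mechanisms in one stroke, whereas your case split on $f(M)$ implicitly assumes a deterministic winner; since the paper's formal definitions are deterministic, this is not a gap, just a slight loss of generality relative to the remark the paper adds at the end of its proof.
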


\begin{proof}
	Consider uniform instances with two nodes $u$ and $v$ of popularity $p$. Clearly, $\E[\Delta(\x)]=1-(1-p)^2=2p-p^2$. We show that for every impartial mechanism $f$, it holds $\E[d_{f(\x)}(\x)]\leq p$. The theorem then follows by taking $p$ to be sufficiently small.
	
	Indeed, consider the profile consisting of the two directed edges between $u$ and $v$ and let $q_u$ and $q_v$ be the probabilities that the winner is node $u$ and node $v$, respectively. Impartiality means that node $u$ is the winner with probability $q_u$ at the profile consisting only of the directed edge from $v$ to $u$ and node $v$ is the winner at the profile consisting only of the directed edge from $u$ to $v$ with probability $q_v$. Overall, 
	\begin{align*}
	\E[d_{f(\x)}(\x)] &=(q_u+q_v) \cdot p^2 + q_u\cdot p \cdot (1-p) + q_v \cdot (1-p)\cdot p \leq (q_u+q_v)\cdot p \leq p.
	\end{align*}
	Notice that our argument includes randomized mechanisms that may return no winner with positive probability at some profiles.
\end{proof}

Second, we show that our assumption about voter independence is crucial since, otherwise, even our most appealing AVD mechanism has linear additive approximation.

\begin{example}\label{ex:1}
	Consider the following instance with $8k+2$ nodes partitioned into sets of nodes $A$ and $B$ of $4k$ nodes each and two additional nodes $a$ and $b$. Node $a$ is approved by no node with probability $1/2$ and all the $4k$ nodes of set $A$ with probability $1/2$ (i.e., there is correlation between the votes in $A$). Similarly, and independently from the approvals to node $a$, node $b$ is approved by no node with probability $1/2$ and by all nodes of set $B$ with probability $1/2$. Notice that there is always a tie and hence AVD always selects the default node, which cannot have expected in-degree higher than $2k$. The expected highest in-degree is $4k$ with probability $3/4$ and $0$ with probability $1/4$, i.e., an expected highest in-degree of $3k$. Hence, the additive approximation is $k$, i.e., linear in the number of nodes.
\end{example}

\end{document}